\documentclass[11pt,letterpaper]{article}

\usepackage{hyperref}
\hypersetup{
     colorlinks   = true,
     urlcolor    = blue,
	 citecolor = blue
}
\usepackage[margin=1in]{geometry}
\usepackage{amssymb,amsthm,amsmath,amssymb,wrapfig,dsfont,authblk}
\usepackage[dvipsnames]{xcolor}
\definecolor{myred}{RGB}{251,154,133}
\definecolor{myblue}{RGB}{153,206,227}
\definecolor{mylightblue}{RGB}{0, 150, 255}
\definecolor{mygreen}{RGB}{32, 210, 64}
\definecolor{mygray}{RGB}{220, 220, 220}

\usepackage{tikz}
\usetikzlibrary{decorations.pathmorphing}
\tikzset{snake it/.style={decorate, decoration=snake}}
\usetikzlibrary{shapes.geometric,positioning,decorations.pathreplacing}

\newcommand{\opt}{\small{\texttt{opt}}}

\newtheorem{theorem}{Theorem}[section]
\newtheorem{lemma}[theorem]{Lemma}
\newtheorem{claim}[theorem]{Claim}
\newtheorem{corollary}[theorem]{Corollary}

\newtheorem{definition}{Definition}[section]

\theoremstyle{definition}
\newtheorem{example}[theorem]{Example}

\newcommand{\E}{\mathbb{E}}
\newcommand{\R}{\mathbb{R}}
\newcommand{\M}{\mathcal{M}}
\newcommand{\X}{\mathcal{X}}
\newcommand{\ind}{\mathds{1}}

\allowdisplaybreaks

\begin{document}
\title{Opting Into Optimal Matchings}
\author[1]{Avrim Blum}
\author[2]{Ioannis Caragiannis}
\author[1]{Nika Haghtalab}
\author[1]{\\Ariel D. Procaccia}
\author[3]{Eviatar B. Procaccia}
\author[4]{Rohit Vaish}
\affil[1]{Computer Science Department, Carnegie Mellon University\\
	{\small\texttt{\{avrim,nhaghtal,arielpro\}@cs.cmu.edu}}}
\affil[2]{Department of Computer Engineering \& Informatics, University of Patras\\
	{\small\texttt{caragian@ceid.upatras.gr}}}
\affil[3]{Department of Mathematics, Texas A\&M University\\ 
	{\small\texttt{procaccia@math.tamu.edu}}}
\affil[4]{Department of Computer Science and Automation, Indian Institute of Science\\
	{\small\texttt{rohit.vaish@csa.iisc.ernet.in}}}

\date{}
\maketitle

\thispagestyle{empty}

\begin{abstract}
We revisit the problem of designing optimal, \emph{individually rational} matching mechanisms (in a general sense, allowing for cycles in directed graphs), where each player --- who is associated with a subset of vertices --- matches as many of his own vertices when he opts into the matching mechanism as when he opts out. We offer a new perspective on this problem by considering an arbitrary graph, but assuming that vertices are associated with players at random. Our main result asserts that, under certain conditions, \emph{any} fixed optimal matching is likely to be individually rational up to lower-order terms. We also show that a simple and practical mechanism is (fully) individually rational, and likely to be optimal up to lower-order terms. We discuss the implications of our results for market design in general, and kidney exchange in particular. 
\end{abstract}

\newpage
\setcounter{page}{1}

\section{Introduction}
\label{sec:intro}

Matching theory has made an astounding real-world impact, through the field of \emph{market design}; it is the cornerstone of the design and analysis of widely deployed applications that match residents to hospitals~\cite{RP99}, students to schools~\cite{APR05}, and organ donors to patients~\cite{RSU04,RSU05,RSU07}. But as matching markets become more prevalent, new issues arise, which potentially limit their (economic) efficiency. In this paper, we tackle one such issue: \emph{individual rationality} (or the lack thereof). Specifically, we study situations where the vertices of the graph are partitioned between a set of players, and each player is interested in matching as many of \emph{his own} vertices as possible. An \emph{individually rational matching} is one that matches at least as many vertices of each player as he can match on his own.

Why is individual rationality a real issue? Of the examples listed earlier, \emph{kidney exchange} provides arguably the most concrete, compelling answer. It is a medical innovation that, in its basic form, allows patients who need kidney transplant, and have willing but medically incompatible donors, to swap donors. From the matching viewpoint, the kidney exchange setting can be represented via a directed \emph{compatibility graph}, where each vertex corresponds to an incompatible \emph{patient-donor pair}, and there is an edge $(u,v)$ if the donor of $u$ is medically compatible with the patient of $v$. A pairwise swap corresponds to a 2-cycle in this graph, but exchanges along longer cycles --- and even along chains, initiated by altruistic donors --- are also important in practice (we also use the term \emph{matching} to refer to cycles and chains in these directed graphs). 

\begin{wrapfigure}{l}{0.4\textwidth}
\centering
\begin{tikzpicture}%[every shadow/.style={fill=black!30,shadow xshift=0.2ex,shadow yshift=-0.2ex}]
\tikzstyle{redcirc}=[circle,
%circular drop shadow,
draw=black,fill=myred,thin,inner sep=0pt,minimum size=5mm]
\tikzstyle{bluecirc}=[circle,
%circular drop shadow,
draw=black,fill=myblue,thin,inner sep=0pt,minimum size=5mm]

\node (v1) at (0,0) [redcirc] {\small{$v_1$}};
\node (v2) at (1.5,0) [bluecirc] {\small{$v_2$}};
\node (v3) at (3,0) [bluecirc] {\small{$v_3$}};
\node (v4) at (4.5,0) [redcirc] {\small{$v_4$}};
\node (v5) at (0.75,1) [redcirc] {\small{$v_5$}};
\node (v6) at (2.25,1) [bluecirc] {\small{$v_6$}};
\node (v7) at (3.75,1) [redcirc] {\small{$v_7$}};

\draw [-latex] (v1) to (v2);
\draw [-latex] (v2) to (v5);
\draw [-latex] (v5) to (v1);
\draw [-latex] (v2) to (v3);
\draw [-latex] (v3) to (v6);
\draw [-latex] (v6) to (v2);
\draw [-latex] (v3) to (v4);
\draw [-latex] (v4) to (v7);
\draw [-latex] (v7) to (v3);
\end{tikzpicture}
\caption{A compatibility graph where individual rationality fails.}
\label{fig:notIR}
\end{wrapfigure}
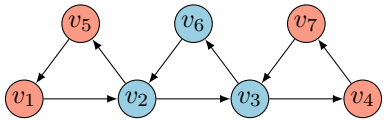 

Based on their work with practitioners, Ashlagi and Roth~\cite{AR14} have recently raised serious concerns regarding individual rationality in kidney exchange; they convincingly argue that as kidney exchange programs outgrow their regional origins, the incentives of hospitals (the players in this case) --- which have little to no interaction outside of the kidney exchange program --- become misaligned. In particular, hospitals cannot be certain that if they opt into a kidney exchange program, which optimizes overall efficiency, their patients would be better off overall than under the optimal \emph{internal} matching (which relies only on donor-patient pairs associated with the hospital). A bad example (due to Ashlagi and Roth) is given in Figure~\ref{fig:notIR}: the maximum cardinality matching selects the 3-cycles $\{v_1,v_2,v_5\}$ and $\{v_3,v_4,v_7\}$, but the blue player can do better by internally matching the single 3-cycle $\{v_2,v_3,v_6\}$. That is, the maximum cardinality matching is twice as large as the unique individually rational matching.

\subsection{Our Approach}

To summarize the preceding discussion, individual rationality is potentially a major obstacle to the economic efficiency of matching markets. Our goal is to analytically demonstrate that, in fact, individual rationality (or an almost perfect approximation thereof) can be achieved with nearly no loss of efficiency. Our key insight is that it suffices to assume that each vertex of the graph is owned by a \emph{random player}. 

In more detail, we consider an arbitrary graph with $n$ vertices $v_1,\ldots,v_n$, and a set of $k$ players $1,\ldots,k$. For each vertex, we draw its owner independently from the probability distribution $p_1,\ldots,p_k$ over the players, that is, each vertex is assigned to player $i$ with probability $p_i$. In the kidney exchange setting, for example, the rationale is very simple: the graph represents medical compatibility information, and there is no special reason why a patient-donor pair with particular medical characteristics would belong to a particular hospital --- the probability of that happening depends chiefly on the size of the hospital.

To see how randomization helps, let us revisit the example given in Figure~\ref{fig:notIR}, and suppose that the two players (red and blue) have probability $1/2$ each: $p_1=p_2=1/2$. The expected utility of a player under the maximum cardinality matching is $6/2=3$. In contrast, a straightforward upper bound on the expected cardinality of an internal matching can be derived by observing that each of the three 3-cycles is owned by a single player with probability $1/8$ and adds at most $3$ to the cardinality of the matching, leading to an upper bound of $3\cdot 3\cdot (1/8)=9/8$. Now, suppose we made $t$ copies of the graph of Figure~\ref{fig:notIR}, for a large $t$; then a simple measure concentration argument would imply that it is very likely that each player is better off in the optimal solution than he is working alone. 

Our goal is to establish this phenomenon in some generality. Indeed, our qualitative message (a few technical caveats apply) is that
\begin{quote}
\emph{... in an \emph{arbitrary} graph, under a random assignment of vertices to players, it is likely that \emph{any} fixed optimal matching is individually rational, up to lower order terms, for each player; and there is a \emph{practical} mechanism that yields an individually rational matching that is likely to be optimal up to lower order terms.}
\end{quote}

\subsection{Our Results and Techniques} 

In \S\ref{sec:opt}, we formalize the first part of the above statement. Specifically, we prove the following theorem:

\medskip

\noindent\textbf{Theorem \ref{thm:main} (informally stated).} \emph{Let $G$ be a directed graph, and let $\opt(G)$ be the set of vertices matched under a specific maximum cardinality matching on $G$. Assume that one of the following conditions holds:
\begin{enumerate}
\item Matchings are restricted to 2-cycles, and $p_i\leq 1/2$ for each player $i$, or
\item Matchings are restricted to cycles of constant length, and for each player $i$, $1/p_i$ is an integer. 
\end{enumerate}
Then for each player $i\in [k]$, the difference between the size of his optimal internal matching, and his share of $\opt(G)$, is at most $O(\sqrt{|\opt(G)|\cdot \ln(k/\delta)})$ with probability $1-\delta$.
}

\medskip

The theorem's first case deals with 2-cycles, a common abstraction for kidney exchange in theoretical studies~\cite{RSU05,TP15,ALG14,AFKP15,CFP15,BGPS13,BDHP+15,AKL16}. Of course in this case there always {\em exists} an optimal and individually rational matching (find the optimal internal matchings and then add augmenting paths), but nonetheless this statement is appealing because it applies to \emph{any} optimal solution that the exchange --- which might also be optimizing some secondary objective --- might produce. Also note that this case is essentially unrestrictive in terms of the probability distribution. The second case is the opposite: its assumption of constant length cycles is essentially unrestrictive, as chains can be represented as cycles by adding an edge from every patient-donor pair to every altruistic donor; and major kidney exchanges --- such as the US national program, run by the United Network for Organ Sharing (UNOS) --- use only cycles of length at most 3, and chains of length at most 4~\cite{RSU07,AR14,DPS12,AGRR12}. But the assumption regarding the probability distribution is, of course, somewhat restrictive. Note, however, that probabilities can be ``rounded'' at a cost, as we discuss later; and that the natural case of equal probabilities is captured by the second case. 

The proof of Theorem~\ref{thm:main} relies on two main ingredients. The first is the claim that the expected size of the maximum internal matching of player $i$ is at most a $p_i$ fraction of the optimal (global) matching. This statement is almost trivial in Case 2; to establish it in Case 1, we decompose the maximum cardinality matching via the Edmonds-Gallai Decomposition~\cite{Ed65}, and show that the inequality holds for each component separately.  

The second ingredient is the concentration of the cardinality of the optimal internal matching of each player around its expectation. To this end, we leverage machinery from modern probability theory that is little known in theoretical computer science, including a concentration inequality for so-called \emph{self-bounding functions}~\cite{boucheron2009concentration}.

The power of Theorem~\ref{thm:main} is that it applies to \emph{any} maximum cardinality matching. In the context of kidney exchange, the theorem captures the matching algorithms currently in use (including the ones employed by UNOS); its conceptual message is that hospitals need not worry about opting into kidney exchange programs, even under the \emph{status quo}. 

By contrast, in \S\ref{sec:ir} we give the designer more power in choosing the matching, with the goal of constructing a mechanism that is (perfectly) individually rational, and almost optimal. As noted earlier, this is quite trivial when only 2-cycles are allowed, as there always exists an optimal, individually rational matching. When longer cycles are allowed, we can derive the following corollary from the proof of Theorem~\ref{thm:main}.

\medskip

\noindent\textbf{Corollary~\ref{cor:ir} (informally stated).} \emph{Let $G$ be a directed graph with $n$ vertices, and let $\opt(G)$ be the set of vertices matched under a specific maximum cardinality matching on $G$. Suppose that matchings are restricted to cycles of constant length, and $p_i=p_j$ for any two players $i,j$. Then, with probability $1-\delta$, there exists a matching that is individually rational for each player $i$, and has maximum cardinality up to $O(k\sqrt{|\opt(G)|\cdot \ln(k/\delta)})$.}
\medskip

Importantly, such an individually rational and almost optimal matching can be found with a \emph{practical}\footnote{By ``practical'' we mean that it can be easily implemented in practice. It is not a polynomial-time algorithm, as computing a maximum cardinality matching is $\mathcal{NP}$-hard when 3-cycles are allowed~\cite{ABS07}; but the problem is routinely solved via integer programming.} mechanism: (i) compute a maximum cardinality matching, (ii) any player who wishes to work alone is allowed to defect. 

Furthermore, we show that our results are tight. Among other things, we construct an example with two players and cycles up to length 3 such that, with constant probability, \emph{any} individually rational matching is smaller than the optimal matching by $\Omega(\sqrt{|\opt(G)|})$.

\subsection{Related Work}

The two papers that are most closely related to ours are the ones by Ashlagi and Roth~\cite{AR14} and Toulis and Parkes~\cite{TP15}. Ashlagi and Roth show that under some technical assumptions, and under a random graph model of kidney exchange, large random graphs admit an individually rational matching that is optimal up to a certain constant fraction of the number of vertices, with high probability. Toulis and Parkes~\cite{TP15} independently study a very similar random graph model (it does make different assumptions about the size of hospitals), and obtain a similar result regarding individual rationality.

These important results have inspired our own work, but --- in addition to a number of significant technical advantages\footnote{In contrast to their work, we obtain optimality up to lower-order terms (instead of up to a constant fraction of $n$), and our results have a good dependence on the number of players (instead of assuming a very large~\cite{AR14} or a very small~\cite{TP15} number) and on the size of the graph (instead of assuming that $n$ goes to infinity).} --- we believe our high-level approach is significantly more compelling. In a nutshell, the random graph model studied by Ashlagi and Roth~\cite{AR14} and Toulis and Parkes~\cite{TP15} draws blood types for each donor and patient from a distribution that gives each of the four blood types (O, A, B, and AB) constant probability. For each pair of blood type compatible vertices (e.g., an O donor is blood type compatible with an A patient, but a B donor is not), a directed edge exists with \emph{constant} probability. This model clearly gives rise to very dense graphs; the key to the abovementioned results is that, with high probability, there exist matchings between blood type compatible groups (such as A patient and B donor, and B patient and A donor) that are perfect in the sense that they match all the vertices in the smaller group. Consequently, the structure of the optimal matching can be accurately predicted with high probability. This model has subsequently been employed in several other papers~\cite{DPS12,BGPS13}. 

However, more recent work by Ashlagi and Roth themselves --- together with collaborators~\cite{AGRR12} --- introduces a completely different random graph model of kidney exchange, which gives rise to sparse graphs, and better captures some real-world phenomena. This model was later employed by Dickerson et al.~\cite{DPS13}. At this point it is fair to say that, on the question of whether random graph models are a valid approach for the analysis of kidney exchange, the jury is still out. But we are convinced that an analysis that holds for \emph{arbitrary} graphs --- when it is feasible, as in this paper --- is the right approach.  

Individual rationality limits players to two possible strategies: work alone or participate fully. More generally, players can choose to reveal a subset of their vertices, and internally match the rest. Several papers seek to design mechanisms that incentivize players to reveal all their vertices, either as a dominant strategy~\cite{AFKP15,HDHS+15} or in equilibrium~\cite{AR14,TP15}. These known results are quite limited; obtaining stronger results is a central open problem. Our own approach does not seem to extend beyond individual rationality.

Needless to say, matching is a major research topic in theoretical computer science. In particular, there are many papers that are directly motivated by market design applications, especially kidney exchange~\cite{CTT12,CIKM+09,Adam11,GT12,BGLM+12,GN13}. These papers are largely orthogonal to our work. 

\section{Optimal Matchings Are Almost Individually Rational}
\label{sec:opt}
Designing and implementing new matching mechanisms can require significant changes to current policies and deployed algorithms. In this section, we show that even without any changes to the existing (optimal) matching mechanisms --- at least in the case of kidney exchange --- it is likely that each player matches almost as many vertices as what he could have obtained on his own.

Consider an arbitrary directed graph $G$ with $n$ vertices. Recall that for each player $i\in[k]$ with corresponding probability $p_i$, the player owns each vertex with probability $p_i$, independently. 
We denote by $H_i\sim_{p_i} G$ the random subgraph of player $i$, which is a subgraph of $G$ induced by assigning each vertex to $i$ with probability $p_i$. We suppress $p_i$ from this notation when it is clear from the context.
We use $\opt(G)$ to denote the set of \emph{vertices} of an arbitrary but fixed matching of $G$. Furthermore, $\opt(G) \restriction H_i$ denotes the restriction of $\opt(G)$ to subgraph $H_i$, i.e., the \emph{vertices} of $H_i$ that are matched under $\opt(G)$. Therefore, to compare the size of the internal matching of $H_i$ with the number of vertices of $H_i$ that are matched under the global matching, we compare  $|\opt(H_i)|$ to $|\opt(G)\restriction H_i|$, and show that these values are within $\tilde O(\sqrt{|\opt(G)|})$ of one another.

Let us first describe a graph in which, with a constant probability, a player's internal matching is larger by $\Omega(\sqrt{|\opt(G)|})$ than the player's share of any fixed optimal matching. 

\begin{example}
\label{ex:lb}
Suppose one of the players has probability $p= \frac 12$, and consider a graph that consists of $n/\log(n)$ stars, each with $\log(n)$ vertices that are connected to the center via $2$-cycles.
Fix an optimal global matching, $\opt(G)$, and note that $|\opt(G)| = 2n/ \log(n)$. We informally argue that there is a constant $c>0$ such that 
\[   \Pr_{H\sim_p G}\left[ | \opt(H) | - |\opt(G)\restriction H | > \Omega(\sqrt{|\opt(G)|})  \right]  > c.
\]

Indeed, let us consider the subgraph internal to the player, $H\sim_p G$.
While the expected number of centers in $H$ is $\frac 14 |\opt(G)|$, it is easy to see (by looking up the standard deviation of the binomial distribution) that, with constant probability, $H$ includes $t=\frac 14 |\opt(G)| + \Theta(\sqrt{|\opt(G)|})$ centers.
Moreover, with probability $\frac 12$, $H$ includes no more than half of the non-center vertices matched by $\opt(G)$. If both events occur, 
$| \opt(G)\restriction H | \leq t+\frac 14|\opt(G)|$,
where $t$ of the matched vertices correspond to the center vertices and at most $\frac 14 |\opt(G)|$ vertices correspond to non-center vertices of $H$ that coincide with $\opt(G)$.
On the other hand, each star is large enough so that with constant probability $H$ includes at least one non-center vertex in each star. In that case, for every center vertex in $H$, $\opt(H)$ gets two matched vertices. 
Therefore, $|\opt(H)| \geq  2t$ internally. It follows that the player can gain an additional $\Theta(\sqrt{|\opt(G)|})$ matched vertices when deviating from a fixed global optimal matching. 
\end{example}

Our main result shows that Example~\ref{ex:lb} is asymptotically tight.

\begin{theorem}
\label{thm:main}
Let $G$ be a directed graph and let $\opt(G)$ be the set of vertices matched under some fixed maximum cardinality matching on $G$. Assume that one of the following conditions holds:
\begin{enumerate}
\item Matchings are restricted to 2-cycles, and $p_i\leq 1/2$ for each player $i\in[k]$, or
\item Matchings are restricted to cycles of constant length, and for each player $i\in[k]$, $1/p_i$ is an integer. 
\end{enumerate}
Then for any $\delta>0$, 
\[\Pr_{H_i \sim_{p_i} G}\left[\forall i\in [k],\ | \opt(H_i) | - \left|\opt(G)\restriction H_i \right| < O\left( \sqrt{|\opt(G)|\ln \frac k\delta} \right)\right]\geq 1-\delta.
\] 
\end{theorem}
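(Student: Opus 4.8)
The plan is to control $|\opt(H_i)|$ and $|\opt(G)\restriction H_i|$ through their expectations. Since $\opt(G)$ is a \emph{fixed} set of $|\opt(G)|$ vertices and each of them lands in $H_i$ independently with probability $p_i$, we have $\E[|\opt(G)\restriction H_i|]=p_i|\opt(G)|$ exactly, and $|\opt(G)\restriction H_i|\sim\mathrm{Binomial}(|\opt(G)|,p_i)$. Writing
\[
|\opt(H_i)|-|\opt(G)\restriction H_i| = \underbrace{\big(|\opt(H_i)|-\E|\opt(H_i)|\big)}_{\mathrm{(b)}} + \underbrace{\big(\E|\opt(H_i)|-p_i|\opt(G)|\big)}_{\mathrm{(a)}} + \underbrace{\big(p_i|\opt(G)|-|\opt(G)\restriction H_i|\big)}_{\mathrm{(c)}},
\]
it suffices to prove (a) the middle term is $\le 0$; (b) an upper-tail bound $|\opt(H_i)|\le\E|\opt(H_i)|+O(\sqrt{|\opt(G)|\ln(k/\delta)})$ with probability $\ge 1-\delta/(2k)$; and (c) a lower-tail bound on the binomial $|\opt(G)\restriction H_i|$ of the same order with probability $\ge 1-\delta/(2k)$; a union bound over the $3k$ failure events then finishes. (If $\ln(k/\delta)\ge|\opt(G)|$ the statement is vacuous, since $H_i\subseteq G$ forces $|\opt(H_i)|\le|\opt(G)|$, so assume otherwise.)

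Step (a), the structural heart of the argument, claims $\E[|\opt(H_i)|]\le p_i|\opt(G)|$. In Case~2, with $1/p_i=m\in\mathbb{N}$, I would couple the ownership of player $i$ with a uniformly random partition of $V(G)$ into $m$ classes, declaring $v$ ``owned by $i$'' iff $v$ falls in class $1$. The optimal internal matchings of the $m$ classes are vertex-disjoint, so their union is a legal matching of $G$ with cycles of length $\le L$; hence $\sum_{j=1}^m|\opt(\text{class }j)|\le|\opt(G)|$, and symmetry across classes gives $\E|\opt(H_i)|=\frac1m\E\big[\sum_j|\opt(\text{class }j)|\big]\le\frac1m|\opt(G)|=p_i|\opt(G)|$. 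In Case~1 (only $2$-cycles), $|\opt(G)|=2\nu(G)$, and I would use the Edmonds--Gallai decomposition $V=C\cup A\cup D$: $G[C]$ is perfectly matchable, each component $D_j$ of $G[D]$ (say $|D_j|=2t_j+1$) is factor-critical, there are no $C$--$D$ edges, and $|\opt(G)|=|C|+2|A|+\sum_j 2t_j$. On $C$ the trivial bound $\E|\opt(H_i\cap C)|\le\E|H_i\cap C|=p_i|C|$ already matches the target; on each $D_j$ we exploit that $|\opt(H_i\cap D_j)|$ is even, so $\E|\opt(H_i\cap D_j)|\le\E|H_i\cap D_j|-\Pr[|H_i\cap D_j|\text{ odd}]=p_i(2t_j+1)-\tfrac{1-(1-2p_i)^{2t_j+1}}{2}\le 2p_i t_j$, the last step using $p_i\le\tfrac12$ together with $(1-2p_i)^{2t_j+1}\le 1-2p_i$. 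Finally, a per-realization charging argument shows $|\opt(H_i)|\le|H_i\cap C|+2|H_i\cap A|+\sum_j|\opt(H_i\cap D_j)|$: in $\opt(H_i)$, matched $C$-vertices are charged to $|H_i\cap C|$, the $D_j$-internal part to $|\opt(H_i\cap D_j)|$, and each matched $A$-vertex together with the at most one $D$-vertex matched to it to $2|H_i\cap A|$; taking expectations gives $\E|\opt(H_i)|\le p_i|\opt(G)|$.

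For step (b), view $f(x):=|\opt(H_i(x))|$ as a function of the $n$ independent coordinates $x_v\in[k]$ (the owner of vertex $v$). Deleting a single vertex changes the size of an optimal constant-length-cycle matching by at most $L$ (with $L=2$ in Case~1), so with $f_v(x^{(v)}):=\min_{x_v'}f$ one gets $0\le f-f_v\le L$, and since only the (exactly) $f$ matched vertices of $\opt(H_i)$ can contribute, $\sum_v(f-f_v)\le Lf$; thus $f$ is self-bounding up to the constant $L$. The concentration inequality for self-bounding functions~\cite{boucheron2009concentration} then yields $\Pr[f\ge\E f+t]\le\exp\!\big(-\Omega_L(t^2/(\E f+t))\big)$, and since $\E f\le p_i|\opt(G)|\le|\opt(G)|$, choosing $t=\Theta(\sqrt{|\opt(G)|\ln(k/\delta)})$ drives this below $\delta/(2k)$. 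Step (c) is just a Chernoff/Hoeffding lower-tail bound on $\mathrm{Binomial}(|\opt(G)|,p_i)$, giving the same deviation with probability $\ge 1-\delta/(2k)$.

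The main obstacle is step (a) in Case~1: while the parity saving on the factor-critical components is clean once isolated, one must carefully verify the per-realization inequality $|\opt(H_i)|\le|H_i\cap C|+2|H_i\cap A|+\sum_j|\opt(H_i\cap D_j)|$ --- i.e.\ that an internal matching cannot use the $A$--$D$ edges more efficiently than the global optimum --- and check that the rounding/parity terms line up across all components of the decomposition. A secondary subtlety is that step (b) genuinely requires the self-bounding machinery: the naive bounded-differences inequality would only control $f$ to within $O(\sqrt{n\ln(k/\delta)})$, which is too weak, whereas the ``variance $\le$ mean'' behaviour of self-bounding functions is what produces the desired $\sqrt{|\opt(G)|}$ scaling.
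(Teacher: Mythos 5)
Your proof is correct and follows the paper's high-level strategy exactly: bound $\E[|\opt(H_i)|]\le p_i|\opt(G)|$, concentrate $|\opt(H_i)|$ around its mean, concentrate the binomial $|\opt(G)\restriction H_i|$, and union bound. Case~2 of step~(a) and step~(c) are verbatim the paper's. You depart in two places, both worth noting. For Case~1 of step~(a), the paper turns the Edmonds--Gallai decomposition into an \emph{edge partition} $G=\biguplus_i G_i$ (stars at $A$-vertices, components of $G[C]$, components of $G[D]$), proves the partition preserves $|\opt(G)|$, and then establishes $\E[|\opt(H\cap G_i)|]\le p|\opt(G_i)|$ piece by piece via a binomial-parity computation that must cover both the perfectly-matchable ($G[C]$) and factor-critical ($G[D]$) cases. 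Your per-realization charging $|\opt(H_i)|\le|H_i\cap C|+2|H_i\cap A|+\sum_j|\opt(H_i\cap D_j)|$ reaches the same conclusion more directly: the trivial count handles $C$, doubling the $A$-charge covers the $A$--$D$ ``star'' structure, and the parity saving is isolated to the factor-critical components, which is the one place the hypothesis $p\le\tfrac12$ is actually used --- indeed your inequality $p_i(2t_j+1)-\Pr[\mathrm{Bin}(2t_j{+}1,p_i)\ \text{odd}]\le 2p_it_j$ is exactly the paper's Equation~(8), and your monotonicity observation $(1-2p)^{2t+1}\le 1-2p$ is a lighter justification than the paper's concavity argument. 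For step~(b), the paper invokes a sub-Gaussian Efron--Stein-type inequality (Boucheron--Bousquet--Lugosi--Massart 2004, Theorem~12) with $C=L^2|\opt(G)|$, and separately explains that the self-bounding \emph{upper} tail is only sub-gamma (because of the $2ct$ term) and hence cannot sharpen the bound to $O(\sqrt{p_i|\opt(G)|\ln(k/\delta)})$. You instead run the self-bounding upper tail directly; as you note, it still yields the coarser target $O(\sqrt{|\opt(G)|\ln(k/\delta)})$ in the non-vacuous regime $\ln(k/\delta)\le|\opt(G)|$, but the algebra hides an $L$-dependent constant and a case split on whether $t\lessgtr\E f$. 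If you care about constants, swapping in the BBLM inequality for the upper tail --- as the paper does --- is cleaner; the self-bounding route is only strictly necessary for the lower tail.
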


The proof of Theorem~\ref{thm:main} involves two main lemmas. The first shows that in expectation $|\opt(H_i)|$ is at most $|\opt(G)\restriction H_i |$. The second asserts
that $|\opt(H_i)|$ is concentrated nicely around its expectation. We formally state these two lemmas without further ado, but defer their proofs to \S\ref{sec:lem:pOPT} and \S\ref{sec:lem:concentrate}, respectively (with overflow in Appendix~\ref{app:main}). 

\begin{lemma}\label{lem:f(p)<pOPT}
Let $G$ be a directed graph and let $\opt(G)$ be the set of vertices matched under some fixed maximum cardinality matching on $G$. Then $  \E_{H\sim_p G}[ | \opt(H) |]  \leq p |\opt(G)|$ if (i) matchings are restricted to 2-cycles, and $p\leq 1/2$, or (ii) $1/p$ is an integer. 
\end{lemma}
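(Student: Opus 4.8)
The plan is to treat the two cases separately, since the easy case (ii) illuminates the structure needed for the harder case (i).

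For case (ii), where $1/p$ is an integer, I would argue as follows. Fix the maximum cardinality matching $M$ realizing $\opt(G)$, and decompose it into its constituent cycles $C_1, \dots, C_m$ (of constant length). A cycle $C_j$ survives into $H$ — i.e., is entirely owned by the player — precisely when all $|C_j|$ of its vertices are assigned to the player, which happens with probability $p^{|C_j|}$. The union of the surviving cycles is a valid internal matching of $H$, but it need not be optimal, so this only gives a lower bound on $|\opt(H)|$; that is the wrong direction. Instead I would bound $|\opt(H)|$ from above by $|V(H) \cap V(M)|$ is not valid either. The correct move: any internal matching of $H$ is in particular a matching of $G$ using only vertices owned by the player, hence its cycles are a subcollection of cycles available in $G$; more to the point, $\E[|\opt(H)|]$ should be compared against $\E[|\opt(G) \restriction H|] = \sum_{v \in V(M)} p = p\,|\opt(G)|$ directly. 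So the claim is $\E[|\opt(H)|] \le p\,|\opt(G)|$. For this I expect the argument to hinge on a \emph{fractional} or \emph{averaging} argument: one exhibits, for the realized $H$, that $\opt(H)$ can be ``charged'' to $\opt(G)$ in a way that, in expectation over the random ownership, loses the factor $p$. Concretely, when $1/p =: q$ is an integer, imagine partitioning the players (or a refinement of the probability space) into $q$ symmetric classes each receiving probability $p$; then the $q$ internal matchings $\opt(H_1), \dots, \opt(H_q)$ are vertex-disjoint and together form a matching of $G$, so $\sum_{\ell} |\opt(H_\ell)| \le |\opt(G)|$, and by symmetry $q\,\E[|\opt(H)|] \le |\opt(G)|$, i.e. $\E[|\opt(H)|] \le p\,|\opt(G)|$. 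This is the ``almost trivial'' argument the introduction alludes to.

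For case (i), 2-cycles with $p \le 1/2$: here $1/p$ need not be an integer, so the clean disjointness trick fails, and this is where I expect the real work to be. The introduction tells us to use the Edmonds–Gallai decomposition of the maximum matching of the underlying undirected graph (2-cycles correspond to undirected edges). The plan: apply Edmonds–Gallai to $G$, obtaining the partition of $V(G)$ into $D$ (vertices missed by some maximum matching), $A$ (neighbors of $D$ outside $D$), and $C$ (everything else). The graph $G[C]$ has a perfect matching, $G[A \cup D]$ decomposes into the $A$-side plus the factor-critical components of $G[D]$, and every maximum matching matches $C$ perfectly, matches $A$ into distinct components of $D$, and near-perfectly matches each $D$-component. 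I would then prove $\E[|\opt(H)|] \le p\,|\opt(G)|$ componentwise: for the perfectly-matched part $C$ (a disjoint union of components each with a perfect matching) and for each factor-critical component of $D$, together with the matched $A$-vertices. The key sub-claim is something like: if $K$ is a graph on $s$ vertices with a perfect matching (or a factor-critical graph on $s$ vertices, odd), and $H \sim_p K$ with $p \le 1/2$, then $\E[|\opt(H)|] \le p \cdot s$ (resp. $p(s-1)$, matching the size of $\opt(G)$'s contribution there, perhaps with the $A$-vertex bookkeeping absorbing slack). For a graph with a perfect matching this should follow because $H$'s optimal matching has size at most the number of owned vertices, and also at most twice the number of edges whose \emph{both} endpoints are owned — and in expectation, using $p \le 1/2$, one can show the binding constraint gives $\le ps$. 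The hypothesis $p \le 1/2$ is clearly essential: if $p$ were close to $1$, $H$ would contain nearly all of a perfect matching and $\E[|\opt(H)|]$ would be close to $s \gg ps$.

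The main obstacle I anticipate is the case (i) componentwise bound on factor-critical components — specifically showing $\E[|\opt(H)|] \le p(s-1)$ for $H\sim_p K$ with $K$ factor-critical on $s$ vertices and $p \le 1/2$. Unlike the perfect-matching case, a factor-critical graph can be quite dense (e.g. an odd clique), so one cannot simply bound by counting owned edges; I expect the proof to use that $\opt(H) \le |V(H)|$ always, combined with the fact that $\opt(H)$ misses at least one vertex whenever $|V(H)|$ is odd, and to leverage $p\le 1/2$ via a coupling or a direct computation showing $\E[|V(H)| - (|V(H)| \bmod 2)] \le p(s-1)$ is false in general — so more than the trivial bound is needed, and the structure of $K$ (e.g. the ear decomposition of factor-critical graphs, or the fact that any two vertices of $K$ are joined by both an even and an odd path) will have to be used to control $\opt(H)$. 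I would aim to reduce the factor-critical case to the perfect-matching case by a contraction or by conditioning on the parity of $|V(H)|$. The $A$-side vertices, matched into distinct $D$-components, should be handled by a charging argument that assigns each matched $A$-vertex to ``its'' component and checks the per-component inequality still has enough slack; I'd expect this bookkeeping to be routine once the two core inequalities are in hand.
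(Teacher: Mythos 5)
Your case (ii) argument is exactly the paper's, and your case (i) outline --- reduce to the undirected graph of 2-cycles, apply the Edmonds--Gallai decomposition, and prove a per-piece expected bound --- is also the paper's strategy. The gap is in the factor-critical case, where you've talked yourself out of the correct argument. You assert that ``a direct computation showing $\E[|V(H)| - (|V(H)| \bmod 2)] \le p(s-1)$ is false in general --- so more than the trivial bound is needed,'' and then speculate about ear decompositions, odd/even paths, and contraction. That assertion is wrong: for odd $s$ and $p \le 1/2$ the trivial counting bound $|\opt(H)| \le 2\lfloor |V(H)|/2 \rfloor$ is all that is used, and it does close the argument. Indeed, writing $X = |V(H)| \sim \mathrm{Bin}(s,p)$ with $s = 2t+1$,
\[
\E\bigl[2\lfloor X/2 \rfloor\bigr] \;=\; \E[X] - \Pr[X \text{ odd}] \;=\; sp - \tfrac12\bigl(1-(1-2p)^{s}\bigr),
\]
and the function $p \mapsto \tfrac12 - \tfrac12(1-2p)^{s} - p$ is concave on $[0,\tfrac12]$ with roots at $p=0$ and $p=\tfrac12$, hence nonnegative there, giving $\E[2\lfloor X/2 \rfloor] \le (s-1)p$. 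This is precisely the paper's Claim~\ref{claim:type23}, which is stated for any graph on $n$ vertices with $|\opt(G)| \ge n-1$ and uses no structural property of factor-criticality whatsoever. The ear-decomposition machinery you anticipate is unnecessary.

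A secondary point: your plan to handle $A$-vertices by ``assigning each matched $A$-vertex to its $D$-component and checking slack'' is not the paper's bookkeeping and is harder to make rigorous, since an $A$-vertex may have edges into several $D$-components, into $C$, and into other $A$-vertices. The paper instead makes each $A$-vertex the center of its own \emph{star} subgraph consisting of all edges incident to it (with $A$--$A$ edges assigned to one endpoint arbitrarily); these stars, together with the connected components of $G[C]$ and of $G[D]$, form an edge-disjoint partition of $G$ that preserves $|\opt(G)|$ additively (Claim~\ref{claim:sum}). The per-piece bound for a star is then immediate: $|\opt(G_i)| = 2$, and $\opt(H \cap G_i)$ is nonzero only when the star's center lands in $H$, an event of probability $p$. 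With that adjustment and the corrected use of the trivial bound, your outline becomes the paper's proof.
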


\begin{lemma}\label{lem:concentration}
Let $G$ be a directed graph and let $\opt(G)$ be the set of vertices matched under some fixed maximum cardinality matching on $G$. Assume matchings are restricted to cycles of length up to a constant $L$.
Then for any $\delta>0$,  with probability $1-\delta$ over random choices of $H_i \sim_{p_i} G$, for all $i\in[k]$, 
\[  \E_{H_i\sim_{p_i} G}[ | \opt(H_i) |] -  L \sqrt{ 2\cdot \E[|\opt(H_i)|] \ln \frac{2k}{\delta} } < | \opt(H_i) |  <       \E_{H_i\sim_{p_i} G}[ | \opt(H_i) |] +  2L \sqrt{|\opt(G) |\ln \frac{2k}{\delta}}.
\]
\end{lemma}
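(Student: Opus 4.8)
The plan is to fix a single player $i$ and treat $|\opt(H_i)|$ as a function $f(Z_1,\dots,Z_n)$ of the independent indicators $Z_v=\ind[v\in H_i]$, $v\in[n]$ (for a fixed $i$ these are genuinely independent, since the owners of distinct vertices are drawn independently). First I would show that a rescaling of $f$ is a weakly self-bounding function; then I would invoke the concentration inequality for such functions from~\cite{boucheron2009concentration}; and finally a union bound over the $k$ players gives the ``for all $i$'' conclusion. For each coordinate $v$ I would take $f_v$ to be the value of $f$ when $Z_v$ is forced to $0$, i.e.\ $f_v=|\opt(H_i\setminus v)|$, which is a function of the other coordinates only.

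The two structural facts to verify are: (i) \emph{bounded differences} $0\le f-f_v\le L$, where the lower inequality is monotonicity of $\opt$ under adding a vertex, and the upper one follows by taking a maximum packing of $H_i$, deleting the at most one cycle through $v$, and observing that the remainder is a valid packing of $H_i\setminus v$ that has lost at most $L$ matched vertices; and (ii) a \emph{self-bounding inequality with an $L$-factor}: fixing a maximum packing $C$ of $H_i$, a vertex $v$ has $f-f_v>0$ only if it lies on some cycle $c(v)\in C$, and then $f-f_v\le|c(v)|$, so that $\sum_v(f-f_v)\le\sum_{c\in C}|c|^2\le L\sum_{c\in C}|c|=L\,f$ and $\sum_v(f-f_v)^2\le\sum_{c\in C}|c|^3\le L^2 f$, using $|c|\le L$ and $\sum_{c\in C}|c|=|\opt(H_i)|=f$. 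Hence $g:=f/L$ has coordinatewise decrements in $[0,1]$ with $\sum_v(g-g_v)^2\le L\,g$, i.e.\ $g$ is weakly $(L,0)$-self-bounding.

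Applying the weakly self-bounding concentration bound of~\cite{boucheron2009concentration} to $g$ and unscaling then yields, up to the universal constants there, a sub-Gaussian lower tail $\Pr[f\le\E f-t]\le\exp(-t^2/(2L^2\E f))$ and a Bernstein-type upper tail $\Pr[f\ge\E f+t]\le\exp(-t^2/(2L^2\E f+\Theta(L^2)t))$. Setting each right-hand side to $\delta/(2k)$ and solving for $t$ gives, for the lower tail, exactly the deviation $L\sqrt{2\,\E[|\opt(H_i)|]\ln(2k/\delta)}$ of the lemma, and for the upper tail a deviation of order $L\sqrt{\E[|\opt(H_i)|]\ln(2k/\delta)}+\Theta(L^2)\ln(2k/\delta)$; in the latter I would then use the deterministic bound $|\opt(H_i)|\le|\opt(G)|$ (every matching of the subgraph $H_i$ is a matching of $G$) to replace $\E[|\opt(H_i)|]$ by $|\opt(G)|$, and absorb the lower-order $\Theta(L^2)\ln(2k/\delta)$ term into the gap between $\sqrt2$ and $2$ --- which is legitimate once $|\opt(G)|\gtrsim L^2\ln(2k/\delta)$, the claim being trivial otherwise. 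A union bound over the $k$ players and the two tails then accounts for the factor $2k$ inside the logarithm.

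I expect the crux to be getting the $L$-dependence right. Since $|\opt(H_i)|$ is not literally self-bounding, one cannot quote the textbook statement directly; the correct move is to divide by $L$ so the decrements land in $[0,1]$, at the cost of a factor $L$ on the right of the self-bounding inequality, and it is exactly this that makes the two tails \emph{asymmetric} --- $\sqrt{\E[|\opt(H_i)|]}$ in one direction and $\sqrt{|\opt(G)|}$ in the other, matching the lemma. The remaining pieces --- verifying (i) and (ii), and the elementary quadratic manipulation that cleans up the Bernstein upper tail into the $2L\sqrt{|\opt(G)|\ln(2k/\delta)}$ form --- are routine.
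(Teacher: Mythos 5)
Your proof is correct, and for the lower tail it matches the paper's approach almost exactly: both arguments rescale $f=|\opt(H_i)|$ by $1/L$ so that coordinatewise decrements lie in $[0,1]$, observe that only the at most $|\opt(H_i)|$ vertices lying on a fixed maximum packing can have a nonzero decrement, and invoke the self-bounding concentration bound of Boucheron--Lugosi--Massart (2009). The only cosmetic difference there is that the paper verifies the \emph{plain} $(L,0)$-self-bounding condition $\sum_v (g-g_v)\le Lg$, while you verify the \emph{weakly} self-bounding condition $\sum_v (g-g_v)^2\le Lg$; both yield the same sub-Gaussian lower tail.

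For the upper tail, however, your route genuinely diverges from the paper's. The paper does \emph{not} use self-bounding for the upper tail; it instead applies a one-sided Efron--Stein moment inequality (Boucheron, Bousquet, Lugosi, Massart 2004, Theorem~12, quoted as Lemma~\ref{prop:concentrate}), whose hypothesis is a \emph{pointwise} bound on the conditional quantity $\E\bigl[\sum_i (f-f_i')^2\ind_{f>f_i'}\mid \mathbf{x}\bigr]\le L^2|\opt(G)|$, valid for every $\mathbf{x}$ because each summand is at most $L^2$ and at most $|\opt(H_{\mathbf{x}})|\le|\opt(G)|$ summands are nonzero. This gives $\exp(-\epsilon^2/(4L^2|\opt(G)|))$ in one line. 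You instead push the self-bounding machinery through the upper tail as well, which yields a Bernstein-type bound with an extra $\Theta(L^2)\ln(2k/\delta)$ term, and then recover the lemma's $2L\sqrt{|\opt(G)|\ln(2k/\delta)}$ by bounding $\E f\le|\opt(G)|$ and absorbing the lower-order term via a case split (trivial when $|\opt(G)|\lesssim L^2\ln(2k/\delta)$, absorbable otherwise). Your absorption argument does work out: the trivial regime $|\opt(G)|\le 4L^2\ln(2k/\delta)$ and the absorbable regime $|\opt(G)|\gtrsim L^2\ln(2k/\delta)/(2-\sqrt2)^2\approx 2.9\,L^2\ln(2k/\delta)$ overlap, so all cases are covered. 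It is worth noting that the paper explicitly discusses, in the appendix (``Why self-bounding functions do not lead to better upper bounds''), the very $2ct$ correction you are wrestling with --- but their concern is that it blocks a \emph{sharper} upper tail of order $L\sqrt{\E[|\opt(H_i)|]\ln(2k/\delta)}$, not that it blocks the $|\opt(G)|$-based bound actually stated in the lemma. Your reading is thus compatible with theirs. The trade-off: your proof is more uniform (one tool for both tails) but needs the case split; the paper's upper-tail proof is cleaner and constant-exact because the Efron--Stein hypothesis is deterministic in $\mathbf{x}$ and asymmetric in exactly the direction needed.
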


We now easily prove our main result --- Theorem~\ref{thm:main} ---
by directly leveraging the two lemmas we just stated. 

\begin{proof}[Proof of Theorem~\ref{thm:main}]
Since $\opt(G)$ is fixed and $H_i$ is drawn from $G$ independently of $\opt(G)$, the expected number of vertices player $i$ has in $\opt(G)$ is
\begin{equation}\label{eq:EpOPT}
  \E_{H_i\sim_{p_i} G}[ | \opt(G) \restriction H_i| ] = p_i\cdot |\opt(G)|.
\end{equation}
Moreover, $| \opt(G) \restriction H_i| = \sum_{v\in \opt(G)} \ind_{v \in H_i}$, where $\ind_{v \in H_i}$ is an indicator variable with value $1$ if $v$ is owned by player $i$ and $0$ otherwise. So, $\ind_{v \in H_i}$ is a random variable that has value $1$ with probability $p_i$, and value $0$ otherwise. Using Hoeffding's inequality over $|\opt(G)|$ i.i.d.~variables for a fixed $i\in [k]$, as well as Equation~\eqref{eq:EpOPT},
\begin{equation}\label{eq:hoeffding}
  \Pr_{H_i\sim_{p_i} G}\left[| \opt(G) \restriction H_i| \geq p_i\cdot |\opt(G)| - \sqrt{\frac 12|\opt(G) | \ln \frac{2k}{\delta} }\right]\geq 1-\frac{\delta}{2k}.
\end{equation}
Putting this together with Lemmas~\ref{lem:f(p)<pOPT} and \ref{lem:concentration}, we have that for all $i\in[k]$, with probability $1-\delta$,
\begin{align*}
| \opt(H_i) |  & \leq  \E_{H_i\sim_{p_i} G}[ | \opt(H_i) |] +  2L \sqrt{|\opt(G) |\ln \frac{4k}{\delta}}\\
   & \leq     p_i~|\opt(G)|  +   2L \sqrt{ |\opt(G) |\ln \frac{4k}{\delta}}\\
   & \leq   | \opt(G) \restriction H_i|  +  (2L+1) \sqrt{|\opt(G) |\ln \frac{4k}{\delta}},
\end{align*}
where the first inequality follows from Lemma~\ref{lem:concentration} (using $\delta'=\delta/2$), the second from Lemma~\ref{lem:f(p)<pOPT}, and the third from applying Equation~\eqref{eq:hoeffding} to each $i\in [k]$.
\end{proof}

Note the logarithmic dependence of Theorem~\ref{thm:main} on the number of players $k$. A subtle point is that if the number of players is large, some will have a small $p_i$, which means that the expectation of $|\opt(G)\restriction H_i|$ is small compared to $|\opt(G)|$, by Equation~\eqref{eq:EpOPT}. In that case, a gain of $\sqrt{|\opt(G)|}$ is significant. Nevertheless, the theorem's conceptual message --- that following the global matching is individually rational up to lower order terms --- holds for any $p_i=\omega(1/\sqrt{|\opt(G)|})$. 

In addition, recall that Theorem~\ref{thm:main} considers two cases, (i) $1/p_i$ is an integer and (ii) $p_i\leq 1/2$ and $\opt(G)$ is restricted to  $2$-cycles.
Importantly, these two assumptions are only needed for Lemma~\ref{lem:f(p)<pOPT}. We conjecture that indeed Lemma~\ref{lem:f(p)<pOPT} holds for any $p\leq \frac 12$, whenever $\opt(G)$ is restricted to cycles of constant length --- in which case Theorem~\ref{thm:main}, too, would hold under this weaker assumption. One might wonder why assuming $p\leq \frac 12$ is even necessary. But in Appendix~\ref{app:large_p_example} we construct examples that violate the conclusion of Theorem~\ref{thm:main} for certain values of $p>1/2$. 

Finally, we remark that if $1/p_i$ is close to an integer but not itself an integer, one can first round down $p_i$ to the largest $q_i< p_i$ such that $1/q_i$ is an integer, and then apply Theorem~\ref{thm:main}. This would give the same result, up to an additional \emph{constant} fraction of $|\opt(G)|$. As $p_i$ becomes smaller the rounding error also diminishes. 

\subsection{Proof of Lemma~\ref{lem:f(p)<pOPT}} \label{sec:lem:pOPT}

First, let us address Case 2 of the lemma. Consider $p$ such that $1/p$ is an integer; $\opt(G)$ may include cycles of any length. Imagine there are $\frac 1 p$ players, each with probability $p$. By symmetry between the players, the expected size of the optimal matching in all subgraphs is equal. Furthermore, the total number of vertices matched by players individually is at most $\opt(G)$. Therefore, 
\[ |\opt(G) | \geq \sum_{i=1}^{1/p} \E_{H_i\sim_p G} \left[ | \opt(H_i) | \right]  = \frac 1 p \E_{H\sim_p G} \left[ | \opt(H) | \right],
\] 
which proves the claim.

In the remainder of this section we focus on Case 1 of Lemma~\ref{lem:f(p)<pOPT}, where the matchings are restricted to $2$-cycles and $p\leq 1/2$. For ease of exposition, we treat $G$ as an undirected graph: each directed 2-cycle corresponds to an undirected edge, and we may remove directed edges that are not involved in 2-cycles (as they are useless).

Assume there is a partition $G = G_1 \uplus G_2 \uplus \dots \uplus G_\ell$ of (the undirected graph) $G$ into edge-disjoint (but not necessarily vertex-disjoint) subgraphs that
preserve the size of the optimal matching, i.e., 
\begin{equation}
\label{eq:sum}
|\opt(G) | = \sum_{i= 1}^\ell |\opt(G_i)|.
\end{equation}
Moreover, assume that each of these subgraphs has the property that 
\begin{equation}
\label{eq:pOPT}
\E_{H\sim_p G_i}[ |\opt(H)| ] \leq p~ |\opt(G_i)|. 
\end{equation}
Then, the next equation proves that this property also holds for $G$ at the global level. That is,
\[\E_{H\sim_p G}[ | \opt(H) |] \leq  \sum_{i=1}^\ell \E_{H\sim_p G}\left[ | \opt(H \cap G_i) | \right] = \sum_{i= 1}^\ell \E_{H\sim_p G_i}[ | \opt(H) |] \leq \sum_{i=1}^\ell p ~ | \opt(G_i) | = p ~ | \opt(G)|.
\]
For the first transition, $H\cap G_i$ is the graph with edges that are present in both $H$ and $G_i$; the intuition behind this inequality is that we are essentially allowed to match the same vertices multiple times on the right hand side. The third and fourth transitions follow from Equations~\eqref{eq:sum} and \eqref{eq:pOPT}. 

So, it remains to find a partition of $G$ into edge-disjoint subgraphs, $G_1 \uplus G_2 \uplus \dots \uplus G_\ell$, which satisfies \eqref{eq:sum} and \eqref{eq:pOPT}. We prove that the Edmonds-Gallai Decomposition~\cite{Ed65} can be used to construct a partition satisfying these properties. 

\begin{lemma}[Edmonds-Gallai Decomposition]
Let $G=(V,E)$ be an undirected graph, let $B$ be the set of vertices matched by every maximum cardinality matching in $G$, and let
$D = V \setminus B$. Furthermore, partition $B$ into subsets $A$ and $C = B \setminus A$, where $A$ is the set of  vertices with at least one
neighbor outside $B$. And let $D_1, \dots, D_r$, be the connected components of the induced subgraph $G[D]$. Then the following properties hold.
\begin{enumerate}

\item $\opt(G)$ matches each node in $A$ to a distinct connected component of $G[D]$.
\item Each $D_i$ is \emph{factor-critical}, i.e., deleting any one vertex of $D_i$ leads to a perfect matching in the remainder of $D_i$.
\end{enumerate}
\end{lemma}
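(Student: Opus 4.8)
The plan is to obtain both conclusions from the classical theory of $M$-alternating paths. Call a vertex \emph{inessential} if some maximum matching of $G$ leaves it unmatched, so that $D$ is precisely the set of inessential vertices; I will freely use Berge's theorem (a matching is maximum if and only if it admits no augmenting path) together with the elementary fact that the symmetric difference of two maximum matchings is a disjoint union of even cycles and even-length paths. The argument splits into an auxiliary lemma of Gallai, a structural statement about how an arbitrary maximum matching interacts with $D$, $A$ and $C$, and a short deduction of Properties~1 and 2 from these.

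First I would prove \textbf{Gallai's lemma}: a connected graph $H$ in which every vertex is inessential is factor-critical. Argue by contradiction on a smallest counterexample $H$. A complete graph on an even number of vertices has a perfect matching, hence no inessential vertex, while one on an odd number of vertices is trivially factor-critical; so $H$ is not complete and contains vertices $u,w$ at distance two with a common neighbor $t$. Taking a maximum matching of $H$ that misses $u$ and one that misses $w$, and invoking minimality on $H-t$, one splices these matchings into an $H$-augmenting path with both endpoints in $\{u,w\}$, contradicting Berge's theorem. This case analysis is the one genuinely delicate combinatorial step, and I would follow the standard treatment (e.g.\ in Lov\'asz and Plummer's \emph{Matching Theory}).

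Next I would establish the structural claim: for every maximum matching $M$ of $G$, $M$ restricted to $G[D]$ is a near-perfect matching of each component $D_i$ (covering all but exactly one vertex of $D_i$), $M$ restricted to $G[C]$ is a perfect matching of $G[C]$, and $M$ matches the vertices of $A$ to vertices lying in pairwise-distinct components $D_i$. To prove it I would induct on $|V(G)|$: if $A=\emptyset$ then $G$ splits into $G[D]$ and $G[C]$ with no edges between them, and the base case is handled directly (every vertex of $G[C]$ turns out essential in $G[C]$, giving a perfect matching there, and each component of $G[D]$ is covered by Gallai's lemma); if $A\neq\emptyset$, peel off a vertex $a\in A$ --- which decreases $\nu(G)$ by exactly one, since $a$ is essential --- and invoke a Stability Lemma describing how $D$, $A$ and $C$ transform under this deletion, so as to reduce to the inductive hypothesis. (Equivalently, the whole claim can be read off an Edmonds blossom forest grown from the $M$-exposed vertices, whose shrunk blossoms are exactly the factor-critical pieces whose union is $G[D]$.)

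Finally, Properties~1 and 2 drop out. Property~1 is literally part of the structural claim applied to the fixed matching $\opt(G)$. For Property~2, fix $v\in D_i$; since $v\in D$ there is a maximum matching $M$ of $G$ missing $v$, and the structural claim forces $M\cap E(G[D_i])$ to cover exactly $D_i\setminus\{v\}$, so $v$ is inessential \emph{within} the connected graph $G[D_i]$; as $v$ was arbitrary, Gallai's lemma applied to $G[D_i]$ gives factor-criticality. I expect the main obstacle to be the structural claim / Stability Lemma step --- controlling how a maximum matching meets the $D$--$A$ boundary and, in particular, upgrading ``$v$ is inessential in $G$'' to ``$v$ is inessential in $G[D_i]$'' so that Gallai's lemma can be applied locally; Gallai's lemma itself, though it requires a careful case split, is classical, and the final deduction is routine.
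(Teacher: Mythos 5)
The paper does not prove this lemma: it is stated as a classical citation (Edmonds~\cite{Ed65}), and the text preceding it simply invokes the Edmonds--Gallai decomposition as known background. So there is no in-paper proof to compare your attempt against; what can be assessed is whether your sketch would yield the stated properties.

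Your outline follows one of the standard routes to the Gallai--Edmonds structure theorem (Gallai's lemma plus a structural/stability induction, as in Lov\'asz--Plummer), and the high-level decomposition into these three steps is sound. Two points of caution. First, the sketch of Gallai's lemma as written is not quite the standard argument and, taken literally, has a gap: choosing $u,w$ merely ``at distance two with a common neighbor $t$'' does not by itself put you in a position to ``invoke minimality on $H-t$,'' since $H-t$ need not remain connected nor need all its vertices stay inessential, so the induction hypothesis does not transfer. The usual proof instead picks a maximum matching $M$ missing two vertices $a,b$, chooses the pair to minimize $\mathrm{dist}(a,b)$, observes the distance cannot be $1$, takes an interior vertex $c$ of a shortest $a$--$b$ path and a maximum matching $N$ missing $c$ chosen to maximize $|M\cap N|$, and then derives a contradiction from the alternating path of $M\,\triangle\,N$ starting at $a$. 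Since you explicitly defer to the standard treatment, this is a presentational imprecision rather than a fatal flaw, but the proof as sketched would not close as written. Second, your deduction of Property~2 relies on first having the structural claim in hand (that any maximum matching restricted to $G[D_i]$ is near-perfect), and the structural claim's inductive step in turn relies on the Stability Lemma, which you correctly flag as the main technical burden; until that lemma is actually proved the argument is an outline rather than a proof. Within those caveats, the plan is correct and is essentially the textbook proof rather than anything novel.
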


We now describe how the Edmonds-Gallai Decomposition is used to construct the desired partition of $G$. 
For the $i$th connected component of $G[C]$ and $G[D]$, create a subgraph $G_i$ corresponding to its edges.
Furthermore, for each vertex $i\in A$, create a subgraph $G_i$ corresponding to the set of edges incident on $i$. If there is an edge between two vertices of $A$, $i$ and $i'$, then include the edge in only one of $G_i$ or $G_{i'}$.
Since the Edmonds-Gallai Decomposition has no edges between $C$ and $D$, $G = \biguplus_{i} G_i$ forms a partition of the edge set of $G$.
See Figure~\ref{fig:decomposition} for an example of this construction.

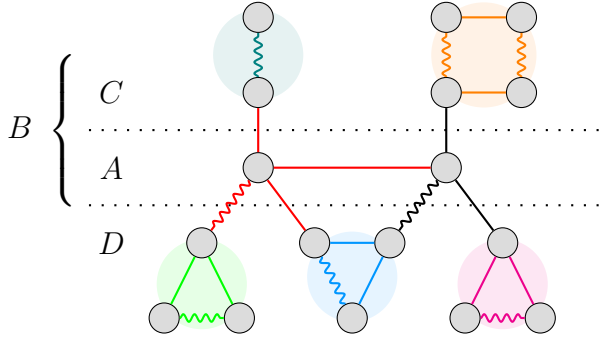
\begin{figure}[t]
\centering
\begin{tikzpicture}%[every shadow/.style={fill=black!30,shadow xshift=0.2ex,shadow yshift=-0.2ex}]
\tikzstyle{graycirc}=[circle,
%circular drop shadow,
draw=black,fill=mygray,thin,inner sep=0pt,minimum size=4mm]
\tikzstyle{snake}=[-,thick,decorate,decoration={snake,amplitude=.5mm,segment length=1.5mm}]
\tikzstyle{line}=[-,thick]
\tikzstyle{group}=[circle,
draw=none,thin,inner sep=0pt, opacity=0.1]

\node (g1) at (0.25, 1.5) [group, minimum size =1.2cm, fill=teal] {};
\node (v1) at (0.25,1) [graycirc] {};
\node (v4) at (0.25,2) [graycirc] {};
\draw [snake, teal](v1) to (v4);

\node (g2) at (3.25, 1.5) [group, minimum size =1.4cm, fill=orange] {};
\node (v2) at (3.75,1) [graycirc] {};
\node (v5) at (3.75,2) [graycirc] {};
\node (v3) at (2.75,1) [graycirc] {};
\node (v6) at (2.75,2) [graycirc] {};
\draw [snake, orange](v2) to (v5);
\draw [snake, orange](v3) to (v6);
\draw [line, orange](v2) to (v3);
\draw [line, orange](v5) to (v6);

\node (u1) at (0.25,0) [graycirc] {};
\node (u2) at (2.75,0) [graycirc] {};

\node (g3) at (-0.5, -1.55) [group, minimum size =1.2cm, fill=green] {};
\node (g4) at (1.5, -1.45) [group, minimum size =1.2cm, fill=mylightblue] {};
\node (g4) at (3.5, -1.55) [group, minimum size =1.2cm, fill=magenta] {};
%\node (g5) at (2.75, -0.15) [group, minimum size =2cm, fill=black] {};

\node (w1) at (-0.5,-1) [graycirc] {};
\node (w2) at (1,-1) [graycirc] {};
\node (w3) at (2,-1) [graycirc] {};
\node (w4) at (3.5,-1) [graycirc] {};

\node (w5) at (-1,-2) [graycirc] {};
\node (w6) at (0,-2) [graycirc] {};
\node (w7) at (1.5,-2) [graycirc] {};
\node (w8) at (3,-2) [graycirc] {};
\node (w9) at (4,-2) [graycirc] {};

\draw [line, red](u1) to (u2);
\draw [line, red](u1) to (v1);
\draw [line, red](u1) to (w2);
\draw [snake, red](u1) to (w1);

\draw [line, black](u2) to (v3);
\draw [line, black](u2) to (w4);
\draw [snake, black](u2) to (w3);

\draw [line, green](w1) to (w5);
\draw [line, green](w1) to (w6);
\draw [snake, green](w5) to (w6);

\draw [line, mylightblue](w2) to (w3);
\draw [line, mylightblue](w3) to (w7);
\draw [snake, mylightblue](w7) to (w2);

\draw [line, magenta](w4) to (w8);
\draw [line, magenta](w4) to (w9);
\draw [snake, magenta](w8) to (w9);

\draw[loosely dotted, thick] (-2 ,0.5) -- (5, 0.5); 
\draw[loosely dotted, thick] (-2 ,-0.5) -- (5, -0.5); 

\node at (-1.7,0) {\large $A$};
\node at (-1.7,1) {\large $C$};
\node at (-1.7,-1) {\large $D$};
\node at (-2.4,0.5) {\large $B ~ \begin{cases} \vspace{1.6cm}\end{cases}$};

\end{tikzpicture}
\caption{\small A graph demonstrating the Edmonds-Gallai Decomposition and the edge-disjoint graph partition for the proof of Lemma~\ref{lem:f(p)<pOPT}. In this graph, each color represents one $G_i$ in the partition $G = \biguplus_i G_i$ and the wavy edges represent the matched edges in $\opt(G)$.}
\label{fig:decomposition}
\end{figure}

We argue that the foregoing partition satisfies Equation~\eqref{eq:sum}; the proof of this claim is relegated to Appendix~\ref{app:sum}.

\begin{claim}
\label{claim:sum}
$| \opt(G) | = \sum_{i} |\opt(G_i)|$.
\end{claim}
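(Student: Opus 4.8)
The plan is to prove Claim~\ref{claim:sum} by establishing the two inequalities $|\opt(G)| \le \sum_i |\opt(G_i)|$ and $|\opt(G)| \ge \sum_i |\opt(G_i)|$ separately. The first direction is straightforward: $\opt(G)$ is a single matching of $G$, and since the edge sets of the $G_i$ partition $E(G)$, restricting $\opt(G)$ to each $G_i$ gives a valid matching of $G_i$, so $|\opt(G)| = \sum_i |\opt(G) \cap G_i| \le \sum_i |\opt(G_i)|$ (here I count vertices, noting that a matching restricted to an edge-disjoint piece is still a matching since every vertex still has degree at most one). Actually some care is needed because the subgraphs are edge-disjoint but not vertex-disjoint, so a vertex matched in $\opt(G)$ contributes to exactly one $G_i$ — the one containing its matching edge — hence the vertex count is additive and the inequality is clean.

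The substantive direction is $\sum_i |\opt(G_i)| \le |\opt(G)|$, and this is where I would use the Edmonds–Gallai structure. The idea is to exhibit, for each piece $G_i$, a matching $M_i$ of $G_i$ whose union $\bigcup_i M_i$ is a matching of $G$ of size $\sum_i |\opt(M_i)| = \sum_i |\opt(G_i)|$; since $\opt(G)$ is maximum, this forces $\sum_i |\opt(G_i)| \le |\opt(G)|$. Concretely: for each component $D_j$ of $G[D]$, factor-criticality says that after deleting the vertex of $D_j$ that $\opt(G)$ matches into $A$ (or any vertex, if $D_j$ has no $A$-neighbor used by $\opt(G)$), the remainder of $D_j$ has a perfect matching; take that near-perfect matching inside $G_{D_j}$ together with the single edge from $A$ into $D_j$ placed in the $A$-vertex's subgraph. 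For each component of $G[C]$, every maximum matching of $G$ matches all of $C$, and $C$-components are vertex-disjoint from everything relevant, so take a perfect matching of that $C$-component. Because the Edmonds–Gallai decomposition has no edges between $C$ and $D$, and each $A$-vertex is matched to a distinct $D$-component, these chosen matchings are pairwise vertex-disjoint across the pieces, so their union is a legal matching of $G$. Counting vertices, this union matches: all of $C$, all of $A$ (via the edges into distinct $D$-components), and $|D_j| - 1$ vertices in each $D_j$ that receives an $A$-edge, plus near-perfect matchings in the other $D_j$'s — which is exactly the vertex count of $\opt(G)$. Hence equality.

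The main obstacle will be bookkeeping around the vertices of $A$: an edge between two $A$-vertices is assigned to only one of their subgraphs, so I must check that $|\opt(G_i)|$ for an $A$-vertex's subgraph $G_i$ is counted consistently — essentially each such $G_i$ is a star (plus possibly one $A$–$A$ edge), whose optimal 2-cycle matching has size $2$, and I need the chosen global matching $M$ to realize exactly one matched edge per $A$-vertex so the per-piece optima sum correctly without double-counting a shared $A$–$A$ edge. I would handle this by noting that $\opt(G)$ itself already matches each $A$-vertex into a distinct $D$-component (property 1), so $\opt(G)$ restricted to $G_i$ has size exactly $2$ for each $A$-vertex's piece, giving the $\le$ direction tightly, and the $M_i$ construction above matches this. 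The factor-criticality of the $D_j$ and the absence of $C$–$D$ edges are exactly what make the union of the $M_i$'s a valid matching, so once those are invoked the counting is routine. I would relegate the detailed vertex-accounting to a short computation but flag that the only genuinely delicate point is ensuring the $A$-to-$D_j$ assignment is a bijection onto the $D_j$'s touched, which is precisely Edmonds–Gallai property~1.
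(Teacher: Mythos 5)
Your proposal is correct and takes essentially the same approach as the paper: both arguments rest on the edge-disjointness of the decomposition together with the three Edmonds--Gallai facts (no $C$--$D$ edges, each $A$-vertex matched to a distinct $D$-component, factor-criticality of the $D_j$'s), and your explicitly constructed matching $M$ is in effect $\opt(G)$ itself restricted piece by piece. The paper simply packages the two inequalities you prove as a single per-piece equality $|\opt(G_i)|=|\opt(G)\restriction^* G_i|$ and sums, which avoids having to re-verify that your $M_i$'s combine into a valid global matching.
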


Next, we show that $G = \biguplus_{i} G_i$ satisfies \eqref{eq:pOPT}. There are three types of $G_i$ in this partition: (i)~$G_i$ is a star, (ii) $G_i$ is a component of $G[D]$ and has a matching that covers all but one vertex, and (iii) $G_i$ is a component of $G[C]$ and has a perfect matching. 

Let us first address case (i) --- that of a star. Clearly it holds that $|\opt(G_i)|=2$. Now, $|\opt(H)|\in \{0,2\}$, and for $\opt(H)$ to be non-empty, $H$ must include the center of the star, which happens with probability $p$.

The following claim, whose proof is relegated to Appendix~\ref{app:type23}, establishes Equation~\eqref{eq:pOPT} in cases (ii) and (iii). Note that this is the only place where the assumption $p\leq 1/2$ is used.

\begin{claim} \label{claim:type23}
For any $p\leq \frac 12$, and any graph $G$ with $n$ vertices such that $|\opt(G)|\geq n-1$,
\[\E_{H\sim_p G}\left[ |\opt(H)| \right] \leq p~|\opt(G)|.\]
\end{claim}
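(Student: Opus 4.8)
The plan is to establish the bound using only the size hypothesis $|\opt(G)|\ge n-1$, together with the trivial upper bound on a maximum matching and a parity computation; no structural feature of $G$ (factor-criticality, having a perfect matching) will actually be needed. First I would note that $|\opt(G)|$ is even (it counts matched vertices, two per matching edge) and is at most $n$, so the hypothesis forces $|\opt(G)|=n$ when $n$ is even and $|\opt(G)|=n-1$ when $n$ is odd.

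The crux is a deterministic inequality valid for \emph{every} graph $H$: a matching covers an even number of vertices, and at most all of $V(H)$, so it covers at most $|V(H)|$ when that number is even and at most $|V(H)|-1$ when it is odd; that is,
\[ |\opt(H)| \;\le\; |V(H)| - \ind\!\left[\,|V(H)|\text{ is odd}\,\right]. \]
Now apply this with $H\sim_p G$. Each of the $n$ vertices of $G$ lies in $H$ independently with probability $p$, so $\E[|V(H)|]=pn$, and the identity $\E\!\left[(-1)^{|V(H)|}\right]=(1-2p)^n$ gives $\Pr[\,|V(H)|\text{ odd}\,]=\tfrac12\bigl(1-(1-2p)^n\bigr)$. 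Taking expectations of the displayed inequality yields
\[ \E_{H\sim_p G}\bigl[|\opt(H)|\bigr] \;\le\; pn-\tfrac12\bigl(1-(1-2p)^n\bigr). \]

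It remains to check that the right-hand side is at most $p\,|\opt(G)|$ in both cases. If $n$ is even then $p\,|\opt(G)|=pn$ and we are done, since $(1-2p)^n\le 1$. If $n$ is odd then $p\,|\opt(G)|=p(n-1)$, so I need $pn-\tfrac12\bigl(1-(1-2p)^n\bigr)\le p(n-1)$, i.e.\ $(1-2p)^n\le 1-2p$; this is exactly where the assumption $p\le\tfrac12$ enters, as it forces $q:=1-2p\in[0,1]$, and then $q^n\le q$ for every $n\ge 1$.

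I do not anticipate a genuine obstacle: the argument is short, and the only things to watch are the parity bookkeeping (remembering that $|\opt(\cdot)|$ is even, which both pins down $|\opt(G)|$ and supplies the ``$-1$'' in the deterministic bound) and the elementary generating-function identity for $\Pr[\,|V(H)|\text{ odd}\,]$. One could instead phrase the middle step through the Tutte--Berge formula, observing that in the odd case the deficiency satisfies $\E[\mathrm{def}(H)]\ge\Pr[\,|V(H)|\text{ odd}\,]\ge p$, but the direct counting above is self-contained and avoids invoking extra machinery.
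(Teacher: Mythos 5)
Your argument is correct and rests on exactly the same idea as the paper's Appendix~B.2: bound $|\opt(H)|\le 2\lfloor |V(H)|/2\rfloor$ deterministically, then take the expectation over $|V(H)|\sim\mathrm{Bin}(n,p)$ and use $p\le\tfrac12$ to close the odd case. Your presentation is cleaner --- the identity $\E\bigl[2\lfloor|V(H)|/2\rfloor\bigr]=\E[|V(H)|]-\Pr[|V(H)|\text{ odd}]=pn-\tfrac12\bigl(1-(1-2p)^n\bigr)$ replaces the paper's explicit binomial-sum manipulations via Equations~\eqref{eq:jeefa2}--\eqref{eq:jeefa3}, and the final inequality $q^n\le q$ for $q=1-2p\in[0,1]$ is more transparent than the concavity argument the paper invokes for Equation~\eqref{eq:jeefa} --- but it is the same proof.
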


Having established \eqref{eq:pOPT}, the proof of Lemma~\ref{lem:f(p)<pOPT} is now complete. \qed

\subsection{Proof of Lemma~\ref{lem:concentration}} \label{sec:lem:concentrate}
We will prove the following equivalent formulation of Lemma~\ref{lem:concentration}:
\begin{equation}
\label{eq:upper}
\Pr_{H\sim G} \left[ |\opt(H)| \geq  \E_{H\sim G}[|\opt(H)|] + \epsilon \right] \leq \exp\left(-\frac{\epsilon^2}{4 L^2~ |\opt(G)|}\right),
\end{equation}
and
\begin{equation}
\label{eq:lower}
\Pr_{H\sim G} \left[ |\opt(H)| \leq  \E_{H\sim G}|[\opt(H)|] - \epsilon \right] \leq \exp\left(-\frac{\epsilon^2}{2 L^2~ \E[|\opt(H)|]}\right).
\end{equation}

Let us first describe a failed approach for proving the lemma, which brings to light some subtleties in the above inequalities.
Consider an explicit description of $|\opt(H)|$ as a function of $n$ random variables, $X_1, \dots, X_n$, where $X_i = 1$ if vertex $i$ is in $H$ and $0$ otherwise. Then $|\opt(H)| = f(X_1, \dots, X_n)$ is the size of the optimal matching on $H$.
One can show that $f(\cdot)$ is $L$-Lipschitz, that is, changing $X_i$ to $\neg X_i$, which corresponds to adding or removing one vertex from $H$, changes the size of the maximum matching by at most $L$. Lipschitz functions are known to enjoy strong concentration guarantees, as shown by McDiarmid's inequality, 
\[\Pr \left[ | f  -  \E[f] | >  \epsilon \right] \leq 2 \exp \left( \frac {-2\epsilon^2}{\sum_{i=1}^n c_i^2} \right), 
\]
where $c_i$ is the Lipschitz constant for the $i$th variable, that is, for all $i$ and for every possible input $x_1, \dots, x_n$, $| f(x_1, \dots, x_i, \dots, x_n) -  f(x_1, \dots, \neg x_i, \dots, x_n)| \leq c_i$. 

While there are only $|\opt(G)|$ variables that truly participate in $\opt(G)$, even vertices that are not in $\opt(G)$ can participate in 
matchings of subsets of $G$, and as a result have a non-zero Lipschitz constant.  Therefore, using McDiarmid's inequality for the concentration of $\opt(H)$ gives
an $O(\sqrt{n})$ gap between $\opt(H)$ and $\E_{H\sim G}[\opt(H)]$. 

Instead, in order to prove a gap of $\tilde{O}(\sqrt{|\opt(G)|})$, we use two alternative concentration bounds from statistical learning theory, which have recently been used to simplify and prove tight concentration and sample complexity results for learning combinatorial functions~\cite{vondrak2010note}.

\begin{lemma}\emph{\cite[Theorem 12]{boucheron2004concentration}} \label{prop:concentrate}
Let $X_1, X_2, \dots, X_n$ be independent random variables, each taking values in a set $\X$.
Let $f : \X^n \rightarrow \mathbb{R}$ be a measurable function.
Let $X'_1, \dots, X'_n$ be independent copies of $X_1, \dots, X_n$ and for all $i\in [n]$,
define $f'_i = f(X_1, \dots, X'_i, \dots, X_n)$. 
For all $i\in [n]$ and $\mathbf{x} \in \X^n$ assume that there exists $C>0$, such that 
\[  \E \left[\left. \sum_{i=1}^n \left(f - f'_i \right)^2 \cdot \ind_{f > f'_i} \right| \mathbf{x} \right] \leq C, 
\]
then for all $\epsilon >0$,
\[   \Pr[f > \E[f] + \epsilon] \leq e^{-\epsilon^2/4C}.
\]
\end{lemma}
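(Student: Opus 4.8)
\medskip
\noindent\textbf{Proof plan.} Lemma~\ref{prop:concentrate} is the Boucheron--Lugosi--Massart upper-tail bound for the ``sum of squared positive differences'' functional, and since it is invoked here as a black box one could simply defer to~\cite{boucheron2004concentration}; for completeness I would prove it by the entropy (Herbst) method. Write $Z=f(X_1,\dots,X_n)$ and $F(\lambda)=\E[e^{\lambda Z}]$, and work on the enlarged space carrying the ghost copies $X'_1,\dots,X'_n$, so that conditionally on all other coordinates $(Z,Z'_i)$ is exchangeable. The plan is to (1) prove a modified logarithmic Sobolev inequality bounding $\operatorname{Ent}(e^{\lambda Z})$ by a sum over coordinates; (2) symmetrize so that only the coordinates with $Z>Z'_i$ survive; (3) feed in the hypothesis to collapse the sum to $C\lambda^2 F(\lambda)$; and (4) run the Herbst differential-inequality argument followed by a Chernoff bound.

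For step (1), I would tensorize entropy over the $2n$ variables (the $X'_i$ contribute nothing, since $Z$ does not depend on them) and apply the variational inequality $\operatorname{Ent}_i(Y)\le\E_i[Y\ln Y]-\E_i[Y]\ln a-\E_i[Y]+a$, valid for any $a>0$ measurable with respect to the coordinates other than $X_i$, with $Y=e^{\lambda Z}$ and $a=e^{\lambda Z'_i}$. This yields
\[
\operatorname{Ent}(e^{\lambda Z})\;\le\;\sum_{i=1}^n\E\bigl[e^{\lambda Z}\,\phi(-\lambda(Z-Z'_i))\bigr],\qquad \phi(x)=e^x-x-1 .
\]
Step (2): split each term on the sign of $Z-Z'_i$ (ties give $\phi(0)=0$), rewrite the $\{Z<Z'_i\}$ part as $\E[e^{\lambda Z'_i}\phi(\lambda(Z-Z'_i))\ind_{Z>Z'_i}]$ using exchangeability, expand $\phi$, and observe that the two pieces telescope into $\lambda(Z-Z'_i)(e^{\lambda Z}-e^{\lambda Z'_i})\ind_{Z>Z'_i}$; then $e^{\lambda Z}-e^{\lambda Z'_i}\le\lambda(Z-Z'_i)e^{\lambda Z}$ on $\{Z>Z'_i\}$ by the mean value theorem (here $\lambda>0$ matters), giving
\[
\operatorname{Ent}(e^{\lambda Z})\;\le\;\lambda^2\,\E\Bigl[e^{\lambda Z}\sum_{i=1}^n(Z-Z'_i)^2\ind_{Z>Z'_i}\Bigr].
\]
Step (3): since $e^{\lambda Z}$ depends only on $\mathbf{x}=(X_1,\dots,X_n)$, condition on $\mathbf{x}$ and invoke the almost-sure bound $\le C$ from the hypothesis to obtain $\operatorname{Ent}(e^{\lambda Z})\le C\lambda^2 F(\lambda)$.

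For step (4), $\operatorname{Ent}(e^{\lambda Z})=\lambda F'(\lambda)-F(\lambda)\ln F(\lambda)$, so $G(\lambda):=\lambda^{-1}\ln F(\lambda)$ satisfies $G'(\lambda)=\operatorname{Ent}(e^{\lambda Z})/(\lambda^2 F(\lambda))\le C$, while $G(0^+)=\E[Z]$ by a Taylor expansion of $F$; integrating gives $\E[e^{\lambda(Z-\E Z)}]\le e^{C\lambda^2}$ for $\lambda>0$, whence a Chernoff bound optimized at $\lambda=\epsilon/(2C)$ produces $\Pr[Z>\E Z+\epsilon]\le e^{-\epsilon^2/(4C)}$. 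The $4C$ in the denominator (rather than the $2C$ one gets for bounded-difference functionals) is exactly the price of the one-sided symmetrization, which is also why the method controls only the upper tail.

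I expect step (1) to be the main obstacle: it needs the modified logarithmic Sobolev inequality in precisely the right form, which rests on the duality/variational characterization of entropy together with careful bookkeeping on the enlarged probability space so that subadditivity of entropy applies coordinate by coordinate. Step (2) is the other delicate point, since the intermediate $\phi(\lambda(Z-Z'_i))$ terms are exponentially large for unbounded increments, and it is only after the exact cancellation that one is left with a benign quadratic; everything downstream is the standard Herbst-plus-Chernoff routine.
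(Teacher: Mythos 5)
The paper does not prove this lemma itself: it is quoted verbatim as Theorem~12 of the cited Boucheron--Lugosi--Massart reference, so there is no in-paper proof to compare against. Your reconstruction via the entropy/Herbst method is correct and is, in substance, exactly the argument in that reference: tensorization of entropy on the enlarged space carrying the ghost copies, the variational bound $\operatorname{Ent}_i(Y)\le\E_i[Y\ln Y-Y\ln a-Y+a]$ with $a=e^{\lambda Z'_i}$ yielding the $\phi(-\lambda(Z-Z'_i))$ form, the exchangeability swap and exact cancellation down to $\lambda(Z-Z'_i)(e^{\lambda Z}-e^{\lambda Z'_i})\ind_{Z>Z'_i}$, the convexity bound $e^{\lambda Z}-e^{\lambda Z'_i}\le\lambda(Z-Z'_i)e^{\lambda Z}$ (valid only for $\lambda>0$, which is what restricts you to the upper tail), conditioning on $\mathbf{x}$ to pull out $C$, and the $G(\lambda)=\lambda^{-1}\ln F(\lambda)$ differential inequality with Chernoff optimization at $\lambda=\epsilon/(2C)$. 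The only thing worth flagging is a mild integrability caveat in identifying $G(0^+)=\E[Z]$, which the standard treatment handles by truncation; otherwise this is a faithful and correct proof of the black-box lemma.
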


We show that the conditions of Lemma~\ref{prop:concentrate} hold for $f(x_1, \dots, x_n) = \opt(H)$. 
Let $\mathbf{x}= (x_1, \dots, x_n)$ and $\mathbf{x}'_i = (x_1, \dots x'_i, \dots, x_n)$. For all $\mathbf{x}$,
let $H_{\mathbf x}$ be the subgraph corresponding to non-zero variables of $\mathbf{x}$.
Note that if $x_i$ is replaced by $x'_i$ and the matching size is reduced, then the decrease is at most the maximum cycle length $L$. 
Furthermore, the only variables that can lead to a non-zero decrease from $|\opt(H_{\mathbf{x}})|$ to $|\opt(H_{\mathbf{x}'_i})|$ are variables that 
are in every optimal matching on $H_{\mathbf{x}}$. Therefore, there are at most $|\opt(H_{\mathbf{x}})| \leq |\opt(G)|$ such variables. 
We conclude that for all $\mathbf{x}$,
\[ \E \left[\left. \sum_{i=1}^n \left(f - f'_i \right)^2 \cdot \ind_{f > f'_i} \right| \mathbf{x} \right] \leq 
   L^2 ~ |\opt(G)|. 
\]
The proof of the upper tail \eqref{eq:upper} follows immediately by using Lemma~\ref{prop:concentrate} with $C =  L^2 ~ |\opt(G)|$.

Unfortunately, Lemma~\ref{prop:concentrate} and its variants for lower-tail concentration cannot be used to establish the desired lower-tail bound \eqref{eq:lower}. Indeed, consider the condition $\E [ \sum_{i=1}^n \left(f - f'_i \right)^2 \cdot \ind_{f < f'_i} | \mathbf{x} ] \leq C$; while removing one of only $|\opt(G)|$ vertices can reduce the size of a matching, it may be possible that for some subgraph of $G$,  adding any of the remaining vertices increases the size of the matching. Instead, we use the lower-tail concentration of self-bounding functions~\cite{boucheron2009concentration}. The rigorous proof of Equation~\eqref{eq:lower} appears in Appendix~\ref{app:lower}. \qed

\section{Individually Rational Matchings That Are Almost Optimal}
\label{sec:ir}
In this section, we provide a \emph{simple and practical} mechanism for kidney exchange that guarantees individual rationality, and with high probability yields a matching that is optimal up to lower-order terms. In comparison to the results of \S\ref{sec:opt}, its disadvantage is that it requires modifying deployed matching mechanisms, which simply return some optimal matching --- our mechanism selects a \emph{specific} matching (which may be suboptimal). However, it is only a minor modification, and therefore has the potential to inform practice.

Let us first consider the case where $\opt(G)$ is restricted to $2$-cycles. In this case we can represent $G$ as an undirected graph, as in \S\ref{sec:lem:pOPT}. Let $H_1, \dots, H_k$ be the subgraphs corresponding to the players. Consider the following matching mechanism, $\M(H_1, \dots, H_k)$: First, compute the matching $M = \bigcup_i \opt(H_i)$; then grow $M$ to a globally maximum cardinality matching by repeatedly applying augmenting paths.
While an augmenting path changes the structure of a matching by adding and removing edges, it strictly expands the set of matched vertices. Therefore, this mechanism leads to a maximum cardinality matching on $G$, with the property that $\M(H_1, \dots, H_k) \supseteq \bigcup_i \opt(H_i)$ for all $i\in[k]$. That is, $\M$ is individually rational.

The performance of the above mechanism for $2$-cycles holds even when the subgraphs owned by players are chosen adversarially, rather than through a random process. Furthermore, this mechanism enjoys the stronger guarantee that every vertex that is matched under $\opt(H_i)$ is also matched under $\M(H_1, \dots, H_k)$.
As we discussed earlier (see Figure~\ref{fig:notIR}), these strong guarantees are unattainable when cycles of length $3$ are allowed. But in our model for randomly generating $H_1,\ldots,H_k$, there is a mechanism that is individually rational and almost optimal, as we show next.

\begin{corollary}
\label{cor:ir}
Let $G$ be a directed graph. Consider optimal matchings on $G$ that are restricted to constant-length cycles.  For all $i\in [k]$, let $p_i = 1/k$.
Then there exists a mechanism $\M$ such that $\M(H_1, \dots, H_k)$ is individually rational, and for any $\delta>0$, 
\[  \Pr\left[| \M(H_1, \dots, H_k) | \geq |\opt(G) | - O\left( k \sqrt{| \opt(G)| \ln \frac k \delta } \right)\right]\geq 1-\delta.
\]
\end{corollary}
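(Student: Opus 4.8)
The plan is to analyze the mechanism already advertised after the statement: compute a fixed maximum cardinality matching $\opt(G)$, and then let players defect. Concretely, $\M$ is produced by starting from $\opt(G)$ and repeatedly picking a not-yet-defected player $i$ whose vertices are covered fewer than $|\opt(H_i)|$ times by the current matching, deleting every cycle of the current matching that uses a vertex of $i$, and re-inserting (a suitable sub-collection of) $\opt(H_i)$, which is legal since $\opt(H_i)$ uses only $i$'s vertices; the process stops when no undefected player wishes to defect. Each player defects at most once, so it halts in at most $k$ rounds, and at termination every player --- defected or not --- is covered at least $|\opt(H_i)|$ times, so $\M$ is individually rational for every realization of the ownership assignment. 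Let $S^\ast$ denote the final set of defectors.

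For the cardinality guarantee I would show that the number of vertices matched by $\opt(G)$ but not by $\M$ is $O\bigl(L^2\,\Delta\bigr)$, where $\Delta := \sum_{i\in[k]}\bigl(|\opt(H_i)|-|\opt(G)\restriction H_i|\bigr)_+$ is the total deficiency, and then invoke Theorem~\ref{thm:main}: since $p_i=1/k$, the integrality hypothesis of Case~2 holds, so with probability $1-\delta$ every player has deficiency $O\bigl(\sqrt{|\opt(G)|\ln(k/\delta)}\bigr)$ simultaneously; summing over the at most $k$ players gives $\Delta = O\bigl(k\sqrt{|\opt(G)|\ln(k/\delta)}\bigr)$ with probability $1-\delta$, and since $L$ is constant this yields $|\M|\ge|\opt(G)|-O\bigl(k\sqrt{|\opt(G)|\ln(k/\delta)}\bigr)$. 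To make the deterministic loss bound work I would have a defecting player re-insert only a \emph{minimal} sub-collection of its internal cycles needed to reach its threshold rather than all of $\opt(H_i)$: if it is short by $d$ covered vertices it adds at most $d$ cycles of length $\le L$, and each vertex of a re-inserted cycle destroys at most one cycle (of length $\le L$) of the current matching, while each re-inserted cycle re-matches $\Omega(1)$ previously unmatched vertices of that player. Thus one unit of repaired deficiency costs only $O(L^2)$ already-matched vertices, and the total over the run should telescope against $\Delta$.

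The step I expect to be the main obstacle is exactly that accounting, i.e.\ bounding the \emph{cascade} of defections: repairing one player breaks cycles belonging to others and may push them below their thresholds, triggering further defections, and a naive recursion does not obviously telescope when $L\ge 2$. I would handle this with a charging/potential argument that simultaneously tracks the size of the current matching and the residual deficiency of the undefected players, using three facts: (i) every destroyed cycle shares a vertex with a cycle the mechanism is actively re-inserting, so destroyed cycles can be charged to ``surplus'' vertices --- those lying in some $\opt(H_i)$ but outside $\opt(G)$ --- of which there are only about $\Delta$; (ii) all cycles have length $\le L$, so each such charge costs $O(L^2)$; and (iii) a player defects at most once, which caps the depth of the cascade, while the total slack $\sum_i\bigl(|\opt(G)\restriction H_i|-|\opt(H_i)|\bigr)_+ = |\opt(G)|-\sum_i|\opt(H_i)|+\Delta$ absorbs most losses before they propagate. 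If this deterministic bound proves too lossy, the fallback is to take $\M$ to be the maximum-cardinality matching among all individually rational matchings --- individually rational for free --- and to lower-bound its size, with high probability, by exhibiting the matching built above, thereby letting the randomness of the ownership assignment be used to show the cascades are short.
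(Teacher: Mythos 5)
The paper's mechanism is much simpler than what you propose: it is the \emph{Veto} mechanism, which computes a fixed $\opt(G)$ and then checks whether $|\opt(H_i)| \le |\opt(G)\restriction H_i|$ for every player. If so, it returns $\opt(G)$; if any player fails the test, it returns the crude fallback $\bigcup_i \opt(H_i)$. There is no incremental repair, no cycle deletion and re-insertion, and no cascade to analyze. The proof then splits on the value of $\E[|\opt(H_i)|]$ (which is the \emph{same} for all $i$ because $p_i = 1/k$): if it is at least $(2L+1)\sqrt{|\opt(G)|\ln(2k/\delta)}$ below $p\,|\opt(G)|$, then by the concentration lemma and Hoeffding no player vetoes (with probability $1-\delta$) and $\M = \opt(G)$; otherwise $\E[|\opt(H_i)|]$ is close to $|\opt(G)|/k$ for \emph{every} player by symmetry, and the lower-tail bound of Lemma~\ref{lem:concentration} shows $\sum_i|\opt(H_i)| \ge |\opt(G)| - O\bigl(k\sqrt{|\opt(G)|\ln(k/\delta)}\bigr)$, so the fallback is already near-optimal. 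The symmetry assumption $p_i = 1/k$ is doing the heavy lifting: it turns ``some player wants to defect'' into ``all players have large internal matchings,'' so that falling back all the way to $\bigcup_i\opt(H_i)$ costs little. Your approach never uses this; it instead tries to bound the damage of defections locally.

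There is a genuine gap in your cascade-accounting argument, and I do not think it can be patched without a substantially different idea. Your claim is that the loss is $O(L^2\Delta)$ where $\Delta$ is the total \emph{initial} deficiency. But a single defection is not locally cheap in the way the charging scheme needs. If player $i$ has $t$ vertices in $\opt(G)$ spread across $t$ distinct $L$-cycles and deficiency only $1$ (i.e.\ $|\opt(H_i)| = t+1$), then deleting all cycles touching $i$ removes up to $t(L-1)$ matched vertices belonging to other players, yet the deficiency being repaired is only $1$. Your point (i) --- that destroyed cycles share a vertex with actively re-inserted cycles and can therefore be charged to surplus vertices --- is false: a destroyed cycle shares a vertex of $i$, but the re-inserted cycles live entirely inside $H_i$ and need not touch the other vertices of the destroyed cycle, so there is no local charge to absorb the collateral damage of $t(L-1)$ vertices. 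Moreover the knocked-out players now have \emph{newly created} deficiency that was not present in $\Delta$, so the recursion does not telescope against the initial $\Delta$; it telescopes only if you track deficiency as it evolves, but then it is no longer bounded by $O(k\sqrt{|\opt(G)|\ln(k/\delta)})$. Your proposed fallback --- take $\M$ to be the maximum-cardinality individually rational matching and exhibit a good one --- is the right instinct, but the matching you should exhibit is not the output of the cascade; it is simply $\bigcup_i\opt(H_i)$ in the bad case and $\opt(G)$ in the good case, which is exactly the paper's Veto mechanism.
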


As advertised, the mechanism underlying Corollary~\ref{cor:ir} is very simple: \emph{Choose an arbitrary optimal matching $\opt(G)$, independently of  $H_1, \dots, H_k$. If for all players $i\in[k]$ we have $|\opt(H_i)| \leq|\opt(G) \restriction H_i|$, then  $\M(H_1, \dots, H_k) = \opt(G)$. Else, let $\M(H_1, \dots, H_k) = \bigcup_{i} \opt(H_i)$.} We call this mechanism the \emph{Veto} mechanism, as any player can veto the proposed optimal matching. Alternatively, we can let players defect if they wish, while allowing the remaining players to continue to work together; for our mathematical purposes this is the same as the Veto mechanism, but the latter interpretation may be even more appealing from a practical viewpoint. 

The proof of Corollary~\ref{cor:ir} appears in Appendix~\ref{app:ir}. In a nutshell, the idea is that because $|\opt(H_i)|$ is concentrated around its expectation by Lemma~\ref{lem:concentration}, if some player wants to veto the proposed matching then it is likely that $\E[\opt(H_i)]$ is close to $|\opt(G)\restriction H_i|$, which is tightly concentrated around $|\opt(G)|/k$ by Hoeffding's inequality. But due to symmetry, this is true for all players, so the players can obtain on their own almost what they can obtain by collaborating.

We remark --- without proof --- that Corollary~\ref{cor:ir} still holds even if the probabilities, instead of being equal, are of the form $1/s^{t_i}$ for a fixed $s\in \mathbb{N}$, and possibly different $t_1,\ldots,t_n$. Similarly to Theorem~\ref{thm:main}, we conjecture that the statement actually holds for any $p_1,\ldots,p_n$ such that $p_i\geq 1/2$ for all $i\in [k]$. To prove this, one would need to strengthen Lemma~\ref{lem:f(p)<pOPT}, as discussed in \S\ref{sec:opt}. But now there is another difficulty: One would need to show that if $\E[\opt(H_i)]$ is close to $p_i|\opt(G)|$ for one player, then the same is true for all players --- in which case falling back to the internal matchings is almost optimal. In the symmetric case, this claim trivially holds, which is precisely why we assume that $p_i=1/k$ for all $i\in [k]$. 

While we require a relatively strong assumption on the probabilities, it is satisfying that the theorem's bound is asymptotically tight. To show this, we present and analyze an example of a graph with $n$ vertices where, with constant probability, \emph{any} individually rational matching is smaller than the optimal matching by $\Omega(\sqrt{n})$.

\begin{example}
\label{ex:3cycle}
Suppose that there are two players, each with probability $p= \frac 12$.
Consider the graph $G$ shown in Figure~\ref{fig:lower_bound_3-cycles}, which consists of four layers $A$, $B$, $C$ and $D$, each with $n/4$ vertices.
Any two layers of the  graph are fully connected if there is an edge between them according to Figure~\ref{fig:lower_bound_3-cycles}. That is, 
the edge set of this graph is such that \emph{any 3 vertices} from $A$, $B$, and $C$, respectively, form a directed $3$-cycle, and \emph{any $2$ vertices} from $C$ and $D$, respectively, form a directed $2$-cycle.
It is optimal to match the vertices in $A$, $B$, and $C$ via $3$-cycles, and therefore $|\opt(G) | = 3n/4$.

It is easy to show, using the standard deviation of the binomial distribution,
that the number of vertices a player owns in each layer deviates by $\pm\Theta(\sqrt{n})$ from its expectation (either larger or smaller) with constant probability.
Denote the number of vertices owned by player $1$ in layers A, B, C, and D by $a$, $b$, $c$, and $d$, respectively. We focus on the case where  $c\in\{n/8+\sqrt{n},\ldots,n/8+(3/2)\sqrt{n}\}$, $a$ and $b$ are both in $\{n/8-(3/2)\sqrt{n},\ldots,n/8-\sqrt{n}\}$, and $d\geq 3\sqrt{n}$  --- which happens with constant probability. Intuitively, player $1$ is doing well, because he owns significantly more than half of the vertices in layer $C$, which is especially important. 

In the foregoing case, $\opt(H_1)$ is obtained by taking $\min\{a,b\}$ 3-cycles between $A$, $B$, and $C$ (as many as possible), and then $c-\min\{a,b\}$ $2$-cycles between $D$ and the unmatched vertices of $C$. Therefore, 
$$| \opt(H_1) | = 3 \cdot \min\{a,b\} + 2 (c-\min\{a,b\}) = 2\cdot c+\min\{a,b\} = 3n/8 + \Theta(\sqrt{n}).
$$

On the other hand, consider some matching $\mathcal{M}$ with $x$ 3-cycles and $y$ 2-cycles, such that (without loss of generality) $x+y = n/4$ --- as the total number is constrained by the $n/4$ vertices in layer $C$. Note that under the optimal matching we have $x=n/4$, and $$|\opt(G)\restriction H_1| = a+b+c = 3n/8-\Theta(n).$$ More generally, we have that $|\mathcal{M}\restriction H_1| \leq a+b+c+y$. In order to guarantee that $\mathcal{M}$ is individually rational for player 1, we must close the gap between $|\opt(H_1)|$ and $|\mathcal{M}\restriction H_1|$, which implies that $y=\Omega(\sqrt{n})$. That is, we must sacrifice $\Omega(\sqrt{n})$ 3-cycles in favor of 2-cycles. But that means that $|\mathcal{M}|\leq |\opt(G)|-\Omega(\sqrt{n})$. 
\end{example}

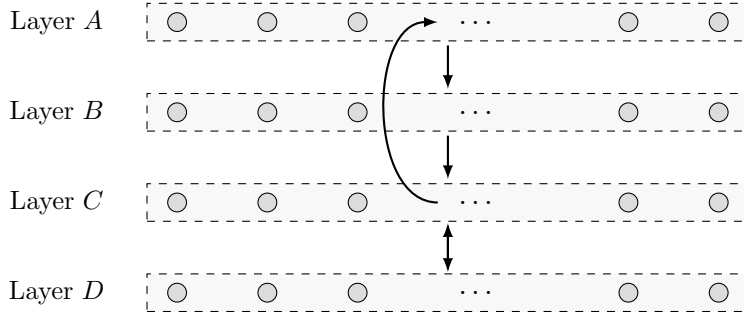
\begin{figure}
\centering
\begin{tikzpicture}[every shadow/.style={fill=black!30,shadow xshift=0.2ex,shadow yshift=-0.2ex},scale=0.8]
\tikzstyle{greycirc}=[circle,
% circular drop shadow,
draw=black,fill=mygray,thin,inner sep=0pt,minimum size=2.5mm]
\tikzset{Textbox/.style = {shape=rectangle}}
\tikzset{Trianglebox/.style = {draw,solid,thick,regular polygon, regular polygon sides=3,inner sep=2.5pt,fill=white, fill opacity=1.0}}
\tikzstyle{line}=[-]

% layers
\node[draw, rectangle, dashed, fill=black!3!,minimum width=80mm, minimum height=5mm] at (6.5,5.5) {};
\node (v11) at (2,5.5)  [greycirc] {};
\node (v12) at (3.5,5.5)  [greycirc] {};
\node (v13) at (5,5.5)  [greycirc] {};
\node (v14) at (9.5,5.5)  [greycirc] {};
\node (v15) at (11,5.5)  [greycirc] {};
\node (v16) at (7,5.5)  {\ldots};
\node (v17) at (6.5,5.5)  {};
\node[Textbox] at  (0,5.5) {\small Layer $A$};

\node[draw, rectangle, dashed, fill=black!3!,minimum width=80mm, minimum height=5mm] at (6.5,4) {};
\node (v21) at (2,4)  [greycirc] {};
\node (v22) at (3.5,4)  [greycirc] {};
\node (v23) at (5,4)  [greycirc] {};
\node (v24) at (9.5,4)  [greycirc] {};
\node (v25) at (11,4)  [greycirc] {};
\node (v26) at (7,4)  {\ldots};
\node (v27) at (6.5,4)  {};
\node[Textbox] at  (0,4) {\small Layer $B$};

\node[draw, rectangle, dashed, fill=black!3!,minimum width=80mm, minimum height=5mm] at (6.5,2.5) {};
\node (v31) at (2,2.5)  [greycirc] {};
\node (v32) at (3.5,2.5)  [greycirc] {};
\node (v33) at (5,2.5)  [greycirc] {};
\node (v34) at (9.5,2.5)  [greycirc] {};
\node (v35) at (11,2.5)  [greycirc] {};
\node (v36) at (7,2.5)  {\ldots};
\node (v37) at (6.5,2.5)  {};
\node[Textbox] at  (0,2.5) {\small Layer $C$};

\node[draw, rectangle, dashed, fill=black!3!,minimum width=80mm, minimum height=5mm] at (6.5,1) {};
\node (v41) at (2,1)  [greycirc] {};
\node (v42) at (3.5,1)  [greycirc] {};
\node (v43) at (5,1)  [greycirc] {};
\node (v44) at (9.5,1)  [greycirc] {};
\node (v45) at (11,1)  [greycirc] {};
\node (v46) at (7,1)  {\ldots};
\node (v47) at (6.5,1)  {};
\node[Textbox] at  (0,1) {\small Layer $D$};

% edges
\draw [line, black, -latex, shorten >=4pt, shorten <=5pt, line width=0.8pt] (v37) to (v47);
\draw [line, black, -latex, shorten >=4pt, shorten <=5pt, line width=0.8pt] (v47) to (v37);
\draw [line, black, -latex, shorten >=5pt, shorten <=5pt, line width=0.8pt] (v17) to (v27);
\draw [line, black, -latex, shorten >=5pt, shorten <=5pt, line width=0.8pt] (v27) to (v37);
\draw [line, black, -latex, in=180,out=180, line width=0.8pt] (v37) to (v17);

\end{tikzpicture} 
\caption{The graph constructed in Example~\ref{ex:3cycle}.}
\label{fig:lower_bound_3-cycles}
\end{figure}

Finally, note that Corollary~\ref{cor:ir} assumes cycles of constant length (as does Theorem~\ref{thm:main}). As noted in \S\ref{sec:intro}, major kidney exchanges do, in fact, only use very short cycles and chains (which can also be represented as cycles) in each match run. But it is nevertheless interesting to point out that the same statement is false when long cycles are allowed. Indeed, in Appendix~\ref{app:long} we present an example of a graph with long cycles, where (with high probability) every individually rational matching is smaller than the optimal matching by $\Omega(n)$.

\section*{Acknowledgments}
This work was partially supported by the NSF under grants DMS-1407558, IIS-1350598, CCF-1215883, CCF-1525932, CCF-1331175, and CCF-1525971; and by Caratheodory grant E.114 from the University of Patras, COST Action IC1205, Sloan Research Fellowship, IBM Ph.D.~Fellowship, and MSR Ph.D.~Fellowship.

\bibliographystyle{plain}
\bibliography{abb,ultimate}

\appendix
%\newpage

\section{Proof of Theorem~\ref{thm:main}: Omitted Claims}
\label{app:main}

This section contains proofs of claims that were omitted from the proof of Theorem~\ref{thm:main}. The claims themselves are stated in \S\ref{sec:opt}.

\subsection{Proof of Claim~\ref{claim:sum}}
\label{app:sum}

Let us define $\opt(G)\restriction^* G_i$ to be the vertices of $\opt(G)\restriction G_i$ that are matched by edges that lie within $G_i$. Since the decomposition is edge-disjoint, it holds that $|\opt(G)| = \sum_i|\opt(G)\restriction^* G_i|$.
It is therefore sufficient to show that for all $i$, 
$|\opt(G_i)| = |\opt(G)\restriction^* G_i|$. 
There are three cases:

\begin{enumerate}
\item $G_i$ corresponds to a component of $G[C]$. Recall that $\opt(G)$ matches all vertices of $C\subseteq B$. Moreover, $C$ has no edges to $D$, and $A$ is only matched with $D$, so the vertices of $G_i$ have no matched edges outside $G_i$. It follows that $\opt(G)\restriction^* G_i$ is itself a perfect matching on $G_i$, and
$|\opt(G_i)| = |\opt(G)\restriction^* G_i|$.
\item $G_i$ corresponds to a star with vertex $i\in A$: 
For each  $i\in A$, by the first property of the Edmonds-Gallai Decomposition, $i$ is matched to a distinct component of $G[D]$. Therefore, $\opt(G)\restriction^* G_i$ includes an edge from  $\opt(G)$. Since any star can have at most one matched edge, we have that $|\opt(G)\restriction^* G_i| = \opt(G)$.
\item $G_i$ corresponds to a component of $G[D]$: 
Since such a component is factor-critical, it has an odd number of vertices, and, for any vertex, a maximum matching that covers all other vertices. Therefore, both $\opt(G_i)$ and $\opt(G)\restriction^* G_i$ match all but one vertex of this component, and $| \opt(G_i)| =   |\opt(G)\restriction^* G_i|$.
\end{enumerate}
\qed

\subsection{Proof of Claim~\ref{claim:type23}}
\label{app:type23}

Let  $t\in\mathbb{N}$ and $p\in [0,1/2]$. It holds that
\begin{equation}
\label{eq:jeefa}
\frac{1}{2} -\frac{1}{2}(1-2p)^{2t+1}  \geq  p,
\end{equation}
because the left hand side is concave and has value $0$ for $p=0$ and $1/2$ for $p=1/2$. We also use the equalities
\begin{equation}
\label{eq:jeefa2}
\sum_{i=0}^k{{k \choose i} x^i} = (1+x)^k,
\end{equation}
and
\begin{equation}
\label{eq:jeefa3}
\sum_{i=1}^k{i{k \choose i}x^{i-1}} = k (1+x)^{k-1}.
\end{equation}

Assume that $n = 2t +1$ for some $t\geq 0$. By the claim's assumption, it holds that $\opt(G) = 2t$. 
Any matching among a set of $i$ vertices matches at most $2\lfloor i/2\rfloor$ vertices. Hence, the expected matching size of the subgraph induced by a random set of vertices when each vertex is included independently with probability $p$ is
\begin{eqnarray*}
\E_{H\sim_p G}\left[ |\opt(H)| \right] & \leq & \sum_{i=1}^{2t+1}{2\lfloor i/2\rfloor {2t+1 \choose i} p^i(1-p)^{2t+1-i}} \\
&=& p(1-p)^{2t}\sum_{i=1}^{2t+1}{i{2t+1 \choose i} \left(\frac{p}{1-p}\right)^{i-1}}-\frac{1}{2}(1-p)^{2t+1}\sum_{i=0}^{2t+1}{{2t+1 \choose i} \left(\frac{p}{1-p}\right)^i}\\
& & +\frac{1}{2}(1-p)^{2t+1}\sum_{i=0}^{2t+1}{{2t+1 \choose i} \left(\frac{p}{1-p}\right)^i(-1)^i}\\
&=& (2t+1)p-\frac{1}{2}+\frac{1}{2}(1-2p)^{2t+1}\\
&\leq & 2tp,
\end{eqnarray*}
where the penultimate transition follows by applying Equation~\eqref{eq:jeefa3} to the first term on the left hand side, and Equation~\eqref{eq:jeefa3} to the second and third terms; and the last transition follows from Equation~\eqref{eq:jeefa}.

If $n=2t$ and $\opt(G)\geq n-1$, it must hold that $\opt(G) = 2t$, because each edge corresponds to two matched vertices. Moreover, 
\begin{align*}
\E_{H\sim_p G}\left[ |\opt(H)| \right]  & \leq \sum_{i=1}^{2t}  {2\lfloor i/2\rfloor {2t \choose i} p^i(1-p)^{2t-i}} \\
       & \leq  p(1-p)^{2t - 1}\sum_{i=1}^{2t}{i{2t \choose i} \left(\frac{p}{1-p}\right)^{i-1}} \\
      &  =  p    (1-p)^{2t- 1} 2t\left(1 + \frac{p}{1-p} \right)^{2i - 1}\\
& = 2tp.
\end{align*}
\qed

\subsection{Proof of Lemma~\ref{lem:concentration}: Omitted Lower-Tail Bound}
\label{app:lower}

Our proof of Equation~\eqref{eq:lower} relies on the concept of \emph{self-bounding function}.
 
\begin{definition}\emph{\cite{boucheron2009concentration}}
A function $g:\X^n \rightarrow \R$ is $(a, b)$-self-bounding if
there exist functions $g_{-i}:\X^{n-1}\rightarrow \R$ for all $i\in [n]$ such that 
for all $\mathbf{x} = (x_1, \dots, x_n)\in \X^n$ and $i\in [n]$, 
\[     0\leq g(\mathbf{x} ) - g_{-i}(\mathbf{x}_{-i}) \leq 1,
\]
and 
\[  \sum_{i=1}^n \left( g(\mathbf{x} ) - g_{-i}(\mathbf{x} _{-i}) \right) \leq a \cdot g(\mathbf{x} ) + b,
\]
where $\mathbf{x}_{-i} = (x_1, \dots, x_{i-1}, x_{i+1}, \dots, x_n)$ is obtained by dropping the $i$th component of $\mathbf{x}$. 
\end{definition}
\begin{lemma}\label{prop:lowertail}\emph{\cite{boucheron2009concentration}}
If $Z = g(X_1, \dots, X_n)$, where $X_i\in \{0,1\}$ are independent random variables and $g$ is an $(a,b)$-self-bounding function with $a\geq \frac 13$, then for any $0<t<\E[Z]$,
\[ \Pr[Z\leq \E[Z] - t ] \leq \exp \left( - \frac{t^2}{2a\cdot \E[Z] + 2b}   \right).
\]
\end{lemma}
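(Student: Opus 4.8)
The plan is to prove this by the \emph{entropy method}, the standard route for tail bounds of self-bounding functions. Write $v = \E[Z]$, let $Z_i := g_{-i}(\mathbf{X}_{-i})$ denote the quantities from the self-bounding definition (so $0 \le Z - Z_i \le 1$ and $\sum_{i=1}^n (Z - Z_i) \le aZ + b$), and set $\phi(u) := e^u - u - 1$. By Markov's inequality applied to $e^{-\lambda Z}$, for every $\lambda > 0$ we have $\Pr[Z \le v - t] \le e^{\lambda(v - t)}\,\E[e^{-\lambda Z}]$, so it suffices to establish the Laplace-transform bound
\[ \ln \E\big[e^{-\lambda Z}\big] \;\le\; -\lambda v + \tfrac12 (a v + b)\,\lambda^2 \qquad (\lambda > 0); \]
choosing $\lambda = t/(av+b)$ then makes the right-hand side of Markov equal to $\exp\!\big(-t^2/(2(av+b))\big)$, which is the claim (the optimization works for every $t>0$, so the hypothesis $t<\E[Z]$ is only needed to make the event interesting).

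To control $F(\lambda) := \E[e^{-\lambda Z}]$ I would derive a differential inequality for it via a modified logarithmic Sobolev inequality. First, sub-additivity (tensorization) of entropy for product measures gives $\mathrm{Ent}(e^{-\lambda Z}) \le \sum_{i=1}^n \E\big[\mathrm{Ent}^{(i)}(e^{-\lambda Z})\big]$, where $\mathrm{Ent}^{(i)}$ is entropy taken over the $i$-th coordinate with the others frozen. For each coordinate I would use the elementary inequality $\mathrm{Ent}(U) \le \E[U\ln(U/c)] - \E[U] + c$, valid for every nonnegative $U$ and constant $c > 0$ (it reduces to $1 + \ln x \le x$), applied conditionally on $\mathbf{X}_{-i}$ with $c = e^{-\lambda Z_i}$ — legitimate because $Z_i$ does not depend on $X_i$. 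A one-line computation rewrites the resulting bound as $\mathrm{Ent}^{(i)}(e^{-\lambda Z}) \le \E^{(i)}\big[e^{-\lambda Z}\phi(\lambda(Z - Z_i))\big]$, hence $\mathrm{Ent}(e^{-\lambda Z}) \le \sum_i \E[e^{-\lambda Z}\phi(\lambda(Z - Z_i))]$. Now the self-bounding structure enters: $u \mapsto \phi(u)/u$ is nondecreasing on $[0,\infty)$ and $0 \le \lambda(Z - Z_i) \le \lambda$, so $\phi(\lambda(Z - Z_i)) \le \phi(\lambda)(Z - Z_i)$, which yields
\[ \mathrm{Ent}(e^{-\lambda Z}) \;\le\; \phi(\lambda)\,\E\Big[e^{-\lambda Z}\,\textstyle\sum_i (Z - Z_i)\Big] \;\le\; \phi(\lambda)\,\E\big[e^{-\lambda Z}(aZ + b)\big]. \]
Expanding $\mathrm{Ent}(e^{-\lambda Z}) = \lambda F'(\lambda) - F(\lambda)\ln F(\lambda)$ and $\E[e^{-\lambda Z}(aZ + b)] = -a F'(\lambda) + b F(\lambda)$ turns this into the closed differential inequality $(\lambda + a\phi(\lambda))F'(\lambda) - F(\lambda)\ln F(\lambda) \le b\phi(\lambda)F(\lambda)$, with $F(0) = 1$ and $F'(0) = -v$.

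The remaining step integrates this. Dividing by $F$, writing $L = \ln F$ and then $G(\lambda) = L(\lambda)/\lambda$ (so $G(0^{+}) = -v$), the inequality rearranges to $G'(\lambda) \le \phi(\lambda)\big(b - aG(\lambda)\big)/\big(\lambda(\lambda + a\phi(\lambda))\big)$. I would compare $G$ against the target line $N(\lambda) = -v + \tfrac12(av+b)\lambda$: substituting $G = N$ on the right and simplifying, the desired $G'(\lambda) \le N'(\lambda) = \tfrac12(av+b)$ is equivalent to the elementary inequality $\phi(\lambda)(2 - 2a\lambda) \le \lambda^2$, which holds for all $\lambda \ge 0$ precisely when $a \ge \tfrac13$ (trivial for $\lambda \ge 1/a$; for $\lambda < 1/a$ it reduces to the case $a=\tfrac13$, where $h(\lambda) := \lambda^2 - \phi(\lambda)(2 - \tfrac23\lambda)$ has $h(0) = h'(0) = 0$ and $h'' \ge 0$). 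Since $G(\lambda) = -v + \tfrac12\mathrm{Var}(Z)\lambda + o(\lambda)$ near $0$ with $\mathrm{Var}(Z) \le \E[\sum_i(Z - Z_i)^2] \le \E[\sum_i(Z - Z_i)] \le av + b$ by Efron--Stein, $G$ starts (weakly) below $N$, and a standard $\varepsilon$-perturbation of $N$ upgrades ``$G' \le N'$ at every contact point'' to ``$G \le N$ on $(0,\infty)$'' — exactly the Laplace-transform bound above. I expect the ODE comparison to be the only real obstacle: it is where $a \ge \tfrac13$ is genuinely used, and since a crude estimate such as $\phi(u) \le u^2/2$ (false for $u > 0$) would destroy the constant, one must keep the $\phi(\lambda)/\lambda$ monotonicity in play. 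Everything before it is routine once the variational characterization of entropy is recalled; alternatively, one may simply invoke the self-bounding lower-tail inequality of \cite{boucheron2009concentration} as a black box, which is the route the paper itself takes.
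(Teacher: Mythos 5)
Your proposal is correct, but note that the paper does not actually prove this lemma: it is imported verbatim from Boucheron, Lugosi, and Massart (2009) as a black box, exactly as you observe in your closing sentence. What you have written is a faithful reconstruction of the entropy-method proof in that reference, and the key steps check out. The variational bound $\mathrm{Ent}(U)\le \E[U\ln(U/c)]-\E[U]+c$ with $c=e^{-\lambda Z_i}$ does collapse to $\E^{(i)}[e^{-\lambda Z}\phi(\lambda(Z-Z_i))]$ because $Z_i$ is $X_i$-measurable-free; the monotonicity of $\phi(u)/u$ (its derivative is $(1+(u-1)e^u)/u^2\ge 0$) gives the linearization $\phi(\lambda(Z-Z_i))\le\phi(\lambda)(Z-Z_i)$; and your reduction of the ODE comparison to $\phi(\lambda)(2-2a\lambda)\le\lambda^2$ is exactly right --- expanding both sides in powers of $\lambda$ shows the binding coefficient is the cubic one, $1/6\le a/2$, which is precisely where $a\ge\tfrac13$ enters (and for $a<\tfrac13$ the inequality already fails to first nontrivial order near $0$, which is why the general statement carries the extra $2c_-t$ term in the denominator). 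The Efron--Stein step $\mathrm{Var}(Z)\le\E[\sum_i(Z-Z_i)^2]\le\E[\sum_i(Z-Z_i)]\le a\E[Z]+b$ correctly seeds the comparison at $\lambda=0^+$. In short: the paper's ``proof'' is a citation, yours is the actual argument behind the citation, and the only caveat is that a fully rigorous write-up would need to flesh out the $\varepsilon$-perturbation upgrading ``$G'\le N'$ at contact points'' to ``$G\le N$ everywhere,'' which is routine.
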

Let $g(\mathbf{x})$ be $\frac 1L$ times the size of optimal matching on the subgraph whose vertices correspond to the non-zero $x_i$'s, i.e., $g(\mathbf{x}) = \frac 1L| \opt(H_{\mathbf{x}})|$. Define $g_{-i}(\mathbf{x}_{-i}) = \min_{x_i}g(\mathbf{x})$.
We show that $g(\cdot)$ is $(L, 0)$-self-bounding.

Since $g_{-i}(\mathbf{x}_{-i})$ is the matching size in $H_{\mathbf{x}_{-i}}$ and 
$|\opt(H_{\mathbf{x}} ) |  \geq  | \opt(H_{\mathbf{x}_{-i}} ) | \geq  | \opt(H_{\mathbf{x}} )| - L$, we have that 
$0\leq g(\mathbf{x}) - g_{-i}(\mathbf{x}_{-i}) \leq 1.$
Furthermore,  $g(\mathbf{x}) - g_{-i}(\mathbf{x}_{-i})$ is non-zero only if vertex $i$ was in every $\opt(H_{\mathbf{x}})$. Since there are at most $|\opt(H_{\mathbf{x}})|$ such variables, we have
\[  \sum_{i=1}^n \left( g(\mathbf{x} ) - g_{-i}(\mathbf{x} _{-i}) \right) \leq |\opt(H_{\mathbf{x}})| 
=L\cdot  g(\mathbf{x}).
\]
Because $L > \frac 13$, we can use Lemma~\ref{prop:lowertail} with $\epsilon = t L$, and obtain
\begin{align*}
\Pr \left[  |\opt(H_i)| \leq  \E[|\opt(H_i)|]  - \epsilon  \right]  \leq \exp \left( - \frac{( \epsilon/L)^2}{2 L  (\frac 1L \E[|\opt(H_i)|])}   \right)
  \leq \exp \left( - \frac{\epsilon ^2}{2L^2 \E[|\opt(H_i)|]}   \right).
\end{align*}
\qed

\paragraph{Why self-bounding functions do not lead to better upper bounds.}
One might wonder 
whether the existing upper-tail bound of self-bounding functions could be used similarly to achieve an improved upper bound of  $L \sqrt{ 2\cdot \E[|\opt(H_i)|] \ln \frac{2k}{\delta} }$ for Lemma~\ref{lem:concentration} --- that is, a bound that depends on $|\opt(H_i)|$ instead of $|\opt(G_i)|$. Here, we answer this question in the negative.
The next lemma bounds the upper tail of $(a,b)$-self-bounding functions.
\begin{lemma}\label{prop:selfbounding}\emph{\cite{boucheron2009concentration}}
If $Z = g(X_1, \dots, X_n)$, where $X_i\in \{0,1\}$ are independent random variables and $g$ is an $(a,b)$-self-bounding function, then for any $0<t<\E[Z]$,
\[ \Pr[Z\geq \E[Z] + t ] \leq \exp \left( - \frac{t^2}{2a\cdot \E[Z] + 2b + 2ct}   \right),
\]
where $c = \max\{0, (3a-1)/6\}$.
\end{lemma}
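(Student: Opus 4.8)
The statement is the upper-tail concentration bound for $(a,b)$-self-bounding functions established by Boucheron, Lugosi and Massart~\cite{boucheron2009concentration}, so it may be invoked as a black box; but the natural way to prove it is the \emph{entropy method} (the ``Herbst argument''), and here is the route I would take.

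\textbf{Reduction to a log-mgf bound.} Since $Z=g(X_1,\ldots,X_n)$ is bounded (here by $n$), $F(\lambda):=\E[e^{\lambda Z}]$ is finite and smooth in $\lambda$, so by Markov's inequality it suffices to show that the log-mgf of the centered variable, $\psi(\lambda):=\log F(\lambda)-\lambda\E[Z]$, obeys the Bernstein-type bound $\psi(\lambda)\le \frac{(a\E[Z]+b)\lambda^2}{2(1-c\lambda)}$ for $0\le\lambda<1/c$. Then, taking $\lambda=t/(a\E[Z]+b+ct)$ (for which $1-c\lambda=(a\E[Z]+b)/(a\E[Z]+b+ct)$), a one-line computation gives $-\lambda t+\psi(\lambda)\le -\frac{t^2}{2(a\E[Z]+b)+2ct}$, which is the claim.

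\textbf{From self-boundedness to a differential inequality.} To bound $\psi$, I would use the sub-additivity (tensorization) of entropy: writing $\mathrm{Ent}(f)=\E[f\log f]-\E[f]\log\E[f]$, one has $\mathrm{Ent}(e^{\lambda Z})\le\sum_i\E[\,\mathrm{Ent}_i(e^{\lambda Z})\,]$, where $\mathrm{Ent}_i$ is entropy over the $i$-th coordinate with the rest frozen. For each $i$ set $Z_i:=g_{-i}(\mathbf{X}_{-i})$ as in the self-bounding definition; since $Z_i$ is independent of $X_i$, the variational characterization of entropy yields the modified log-Sobolev estimate $\mathrm{Ent}_i(e^{\lambda Z})\le\E_i[e^{\lambda Z}\phi(-\lambda(Z-Z_i))]$, where $\phi(u)=e^u-u-1$. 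Using $0\le Z-Z_i\le1$ and convexity of $\phi$ (so $\phi(-\lambda w)\le w\,\phi(-\lambda)$ for $w\in[0,1]$, $\lambda>0$), then summing over $i$ and invoking $\sum_i(Z-Z_i)\le aZ+b$, gives $\mathrm{Ent}(e^{\lambda Z})\le\phi(-\lambda)(aF'(\lambda)+bF(\lambda))$. Combining with the identity $\mathrm{Ent}(e^{\lambda Z})=\lambda F'(\lambda)-F(\lambda)\log F(\lambda)$, dividing by $F(\lambda)$ and writing $G=\log F$ converts this into the differential inequality $\lambda G'(\lambda)-G(\lambda)\le\phi(-\lambda)(aG'(\lambda)+b)$.

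\textbf{Integrating, and the main obstacle.} It then remains to integrate this inequality starting from $\lambda=0$ (where $G(\lambda)/\lambda\to G'(0)=\E[Z]$). Dividing by $\lambda^2$ makes the left side $\tfrac{d}{d\lambda}(G(\lambda)/\lambda)$; using $\phi(-\lambda)\le\lambda^2/2$ (and, for the sharp constant, the convex envelope of $\lambda\mapsto\phi(-\lambda)/\lambda^2$ on $(0,\infty)$), retaining rather than discarding the $a\phi(-\lambda)G'(\lambda)$ term, and tracking constants, one should recover exactly $\psi(\lambda)\le\frac{(a\E[Z]+b)\lambda^2}{2(1-c\lambda)}$ with $c=\max\{0,(3a-1)/6\}$ --- and the cleaner Poissonian bound (no linear term, $c=0$) when $a\le1/3$. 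This last step is where the real work lies: when $a$ is small the $a\phi(-\lambda)G'(\lambda)$ contribution is harmless, but for $a>1/3$ it is genuinely needed, and solving the resulting differential inequality while producing the precise constant $(3a-1)/6$ --- rather than a lossy substitute --- is the delicate part; tensorization, the modified log-Sobolev step, and the Chernoff optimization are all routine by comparison.
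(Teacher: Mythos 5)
The paper does not prove this lemma; it is cited verbatim from Boucheron, Lugosi and Massart, so there is no in-paper proof to compare against. Your sketch follows the correct route for that cited result---Chernoff bound, tensorization of entropy, the modified log-Sobolev estimate with $\phi(u)=e^u-u-1$, convexity of $\phi$ together with $0\le Z-Z_i\le 1$ and $\sum_i(Z-Z_i)\le aZ+b$ to arrive at $\mathrm{Ent}(e^{\lambda Z})\le\phi(-\lambda)(aF'(\lambda)+bF(\lambda))$, and the resulting differential inequality for $G=\log F$. Your Chernoff optimization at $\lambda=t/(a\E[Z]+b+ct)$ is also verified correctly. However, you explicitly stop short of the step that constitutes the actual content of the lemma: integrating the differential inequality $\lambda G'(\lambda)-G(\lambda)\le\phi(-\lambda)(aG'(\lambda)+b)$ to obtain $\psi(\lambda)\le\frac{(a\E[Z]+b)\lambda^2}{2(1-c\lambda)}$ with the precise constant $c=\max\{0,(3a-1)/6\}$. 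For $a\le 1/3$ the $a\phi(-\lambda)G'(\lambda)$ term can indeed be discarded to give $c=0$; for $a>1/3$ it cannot, and the optimization over how to trade that term off against the left side is what produces $(3a-1)/6$ (Boucheron--Lugosi--Massart spend the bulk of their argument on exactly this). Calling it ``the delicate part'' and deferring it is honest, but it means the proposal as written does not establish the stated bound---it reduces it to the hard step without carrying it out. Since the paper itself invokes the lemma as a citation, this is acceptable as an exposition of where the result comes from, but it should not be mistaken for a proof.
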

Note that for $a = L > \frac 13$, the additional $2ct$  term in the denominator causes the
upper-tail bound to decay only as a simple exponential, and leads to significantly weaker concentration. Whether the upper-tail bound of self-bounding functions can be improved to remove this term is an open problem in probability theory, with the first bound appearing in the work of Boucheron et al.~\cite{boucheron2000sharp}, and improved bounds due to McDiarmid and Reed~\cite{mcdiarmid2006concentration} and Boucheron et al.~\cite{boucheron2009concentration}.
Successfully removing the $2ct$ term from the denominator would improve the result stated in Theorem~\ref{thm:main} from $O\left(\sqrt{|\opt(G)| \ln (k/\delta)}\right)$ to $O\left(\sqrt{p_i|\opt(G)| \ln (k/\delta)}\right)$, and Corollary~\ref{cor:ir} from  $O\left(k\sqrt{ |\opt(G)| \ln (k/\delta)}\right)$ to $O\left(\sqrt{k |\opt(G)| \ln (k/\delta)}\right)$.

\section{Proof of Corollary~\ref{cor:ir}}
\label{app:ir}

Let $p=1/k$. If 
$$
\E_{H_i\sim_{p} G}[\opt(H_i)] \leq p~| \opt(G)| -(2L+1)\sqrt{|\opt(G) \ln \frac{2k}{\delta}}
$$ 
then
{\small
\begin{align*}
&\Pr_{H_i \sim G} \left[  |\opt(H_i)| >|\opt(G) \restriction H_i| \right] \\
& \leq  \Pr_{H_i \sim G} \left[  |\opt(H_i)| - |\opt(G) \restriction H_i|  + \left(p~| \opt(G)| - \E[\opt(H_i)]- (2L+1)\sqrt{|\opt(G)| \ln \frac{2k}{\delta}}\right) > 0  \right] \\
&\leq  \Pr_{H_i \sim G} \left[  |\opt(H_i)|  - \E[\opt(H_i)] >2L\sqrt{|\opt(G)| \ln \frac{2k}{\delta}}\right] +\Pr_{H_i \sim G} \left[  p~| \opt(G)| -  |\opt(G) \restriction H_i| >  \sqrt{|\opt(G)| \ln\frac{2k}{\delta} }\right]\\
&\leq \frac \delta k,
\end{align*}}
where the last inequality holds by the upper-tail bound of Lemma~\ref{lem:concentration} and Hoeffding's inequality.
Therefore, with probability $1-\delta$, no player vetoes the proposed optimal matching, and $\M(H_1, \dots, H_k)  = \opt(G)$ is optimal.

On the other hand, if 
$$
\E_{H_i\sim_{p} G}[\opt(H_i)] \geq p~| \opt(G)| -  (2L+1)\sqrt{|\opt(G) \ln \frac{2k}{\delta}},
$$ 
then the expected total size of the internal matching is large, and we can fall back to the internal matchings. Indeed, note that $ \E_{H_i\sim_{p} G}[ | \opt(H_i) |]\leq p|\opt(G)|$ by symmetry. By the lower-tail bound of Lemma~\ref{lem:concentration}, with probability $1 - \delta$, for all $i\in [k]$,
\[ | \opt(H_i) |     \geq    \E_{H_i\sim_{p} G}[ | \opt(H_i) |] -  L \sqrt{2p |\opt(G) |\ln \frac{2k}{\delta} }.
\]
Therefore,  with probability $1-\delta$, 
\begin{align*}
\sum_{i=1}^k   |\opt(H_i)|  &\geq  k \E_{H_i\sim G}[ | \opt(H_i) |] -  k L \sqrt{ \frac 2k |\opt(G) |\ln \frac{2k}{\delta} }\\
&\geq | \opt(G)| - k(2L+1)\sqrt{|\opt(G) \ln \frac{2k}{\delta}} -  k L \sqrt{ \frac 2k |\opt(G) |\ln \frac{2k}{\delta} }\\
&= | \opt(G)| - O\left(k \sqrt{ |\opt(G) |\ln \frac{k}{\delta} } \right).
\end{align*}
So,  $\M(H_1, \dots, H_k)  = \bigcup_i{\opt(H_i)}$ is a near optimal. \qed
\section{Additional Examples}

In this section we present two examples that are referenced in the body of the paper.

\subsection{The Case of Large $p$}\label{app:large_p_example}
We provide examples where the conclusion of Theorem~\ref{thm:main} is violated under $p_i>1/2$. This happens because the examples violate Lemma~\ref{lem:f(p)<pOPT}, that is, they satisfy $$\E_{H\sim_p G}[ | \opt(H) |]  > \E_{H\sim_p G}[|\opt(G)\restriction H|]= p\cdot |\opt(G)|.$$ 

First, suppose that only 2-cycles are allowed. Consider a graph with three vertices $v_1,v_2,v_3$, and 2-cycles between $v_1$ and $v_2$, $v_2$ and $v_3$, and $v_3$ and $v_1$. Suppose $p=2/3$. Then $p |\opt(G)| = 2\cdot (2/3)$. On the other hand, $|\opt(H)|=2$ if $H$ contains at least two vertices (otherwise it is $0$), hence
$$
\E_{H\sim_p G}[ | \opt(H) |] = 2\left(\binom{3}{2} p^{2}(1-p) + \binom{3}{3} p^3\right) = 2\cdot \frac{20}{27} > 2\cdot\frac{2}{3}.
$$
Now consider a graph that contains many disjoint copies of the one just discussed. We have that both $\opt(H)$ and $|\opt(G)\restriction H|$ are concentrated around their expectations (by Hoeffding's inequality), so, with high probability, $|\opt(H)|>|\opt(G)\restriction H| + \Omega(n)$. 

When 3-cycles are allowed too, it is possible to show that the same phenomenon happens, for a value of $p$ sufficiently close to $1$, in a graph with five vertices $v_1,\ldots,v_5$, and 2-cycles between $v_i$ and $v_{i+1}$ for $i=1,\ldots,4$, as well as between $v_5$ and $v_1$.

\subsection{The Case of Long Cycles}
\label{app:long}

We construct an example where long cycles are allowed, and every individually rational matching is smaller than the optimal matching by $\Omega(n)$.  Motivated by kidney exchange, the example includes an altruistic donor, and matchings may include chains initiated by the altruist. However, we can easily transform the example into one where matchings can only include cycles, by adding directed edges from every vertex in the graph to the altruistic donor. 

\begin{figure}[t]
\centering
\begin{tikzpicture}[every shadow/.style={fill=black!30,shadow xshift=0.2ex,shadow yshift=-0.2ex}]
\tikzstyle{greycirc}=[circle,
draw=black,fill=mygray, thin,inner sep=0pt,minimum size=2.5mm]
\tikzset{Textbox/.style = {shape=rectangle}}
\tikzset{Trianglebox/.style = {draw,solid,thick,regular polygon, regular polygon sides=3,inner sep=0.5pt,fill=white, fill opacity=1.0}}
\tikzstyle{line}=[-]

% altruistic donor
\node[Trianglebox] (donor) at (0,1.75) {\small{$d$}};
\node[Textbox] at  (0,1.25) {\footnotesize Altruistic};
\node[Textbox] at  (0,0.95) {\footnotesize donor};

% longest chain
\node (v1) at (2,1)    [greycirc] {};
\node (v2) at (3.5,1)  [greycirc] {};
\node (v3) at (5,1)    [greycirc] {};
\coordinate[right=0.8cm of v3] (rv3);
\node (v4) at (8.5,1)  [greycirc] {};
\coordinate[left=0.8cm of v4]   (lv4);
\node (v5) at (10,1)   [greycirc] {};
\node (v6) at (11.5,1) [greycirc] {};
\draw[decoration={brace,mirror,raise=4pt,amplitude=6pt},decorate] (1.8,0.85) -- node[below=8pt] {\footnotesize Chain of length $\frac{3n}{9}$} (11.75,0.85);

% edges
\draw [line, black, -latex](v1) to (v2);
\draw [line, black, -latex](v2) to (v3);
\draw [line, black, -latex](v3) to (rv3);
\path (rv3) -- node[auto=false]{\ldots} (lv4);
\draw [line, black, -latex](lv4) to (v4);
\draw [line, black, -latex](v4) to (v5);
\draw [line, black, -latex](v5) to (v6);

% layers for dense network of shorter chains
\node[draw, ellipse, dashed, fill=black!3!,minimum width=5mm, minimum height=25mm] at (2,3.25) {};
\node (v11) at (2,2.5)  [greycirc] {};
\node (v12) at (2,3.25)    [greycirc] {};
\node (v13) at (2,4)  [greycirc] {};
\node[Textbox] at  (2,5.25) {\footnotesize Layer $1$};

\node[draw, ellipse, dashed, fill=black!3!,minimum width=5mm, minimum height=25mm] at (3.5,3.25) {};
\node (v21) at (3.5,2.5)  [greycirc] {};
\node (v22) at (3.5,3.25)    [greycirc] {};
\node (v23) at (3.5,4)  [greycirc] {};
\node[Textbox] at  (3.5,5.25) {\footnotesize Layer $2$};

\node[draw, ellipse, dashed, fill=black!3!,minimum width=5mm, minimum height=25mm] at (5,3.25) {};
\node (v31) at (5,2.5)  [greycirc] {};
\node (v32) at (5,3.25)    [greycirc] {};
\node (v33) at (5,4)  [greycirc] {};
\node[Textbox] at  (5,5.25) {\footnotesize Layer $3$};
\node (rv32) at (6.5,3.25)   {};

\node[draw, ellipse, dashed, fill=black!3!,minimum width=5mm, minimum height=25mm] at (8.5,3.25) {};
\node (v41) at (8.5,2.5)  [greycirc] {};
\node (v42) at (8.5,3.25)    [greycirc] {};
\node (v43) at (8.5,4)  [greycirc] {};
\node (lv42) at (7,3.25)  {};
%\node[Textbox] at  (8.5,5.25) {\footnotesize Layer $\frac{n}{4} - 2$};

\node[draw, ellipse, dashed, fill=black!3!,minimum width=5mm, minimum height=25mm] at (10,3.25) {};
\node (v51) at (10,2.5)  [greycirc] {};
\node (v52) at (10,3.25)    [greycirc] {};
\node (v53) at (10,4)  [greycirc] {};
\node[Textbox] at  (10,5.25) {\footnotesize Layer $\frac{2n}{9}$};

% edges within layers
\draw [line, black, -latex, shorten >=5pt, shorten <=5pt](v12) to (v22);
\draw [line, black, -latex, shorten >=5pt, shorten <=5pt](v22) to (v32);
\draw [line, black, -latex, shorten >=5pt, shorten <=5pt](v32) to (rv32);
\draw [line, black, -latex, shorten >=5pt, shorten <=5pt](lv42) to (v42);
\draw [line, black, -latex, shorten >=5pt, shorten <=5pt](v42) to (v52);

\path (v32) -- node[auto=false]{\ldots} (v42);

\draw [line, black, -latex, shorten >=5pt, shorten <=5pt, out=20,in=160](v13) to (v33);
\draw [line, black, -latex, shorten >=5pt, shorten <=5pt, out=22,in=158](v13) to (v43);
\draw [line, black, -latex, shorten >=5pt, shorten <=5pt, out=25,in=155](v13) to (v53);

\draw [line, black, -latex, shorten >=5pt, shorten <=5pt, out=22,in=148](v23) to (v43);
\draw [line, black, -latex, shorten >=5pt, shorten <=5pt, out=25,in=145](v23) to (v53);

\draw [line, black, -latex, shorten >=5pt, shorten <=5pt, out=22,in=138](v33) to (v43);
\draw [line, black, -latex, shorten >=5pt, shorten <=5pt, out=25,in=135](v33) to (v53);

% edges from altruist to layers
\draw [line, black, -latex, shorten >=5pt, shorten <=5pt, in=190,bend right = 7] (donor) to (v11);
\draw [line, black, -latex, shorten >=5pt, shorten <=5pt, in=190,bend right = 7] (donor) to (v21);
\draw [line, black, -latex, shorten >=5pt, shorten <=5pt, in=190,bend right = 7] (donor) to (v31);
\draw [line, black, -latex, shorten >=5pt, shorten <=5pt, in=190,bend right = 7] (donor) to (v41);
\draw [line, black, -latex, shorten >=5pt, shorten <=5pt, in=190,bend right = 7] (donor) to (v51);

% edge from altruist to the chain
\draw [line, black, -latex, shorten >=2pt, shorten <=2pt, bend right = 10] (donor) to (v1);

\end{tikzpicture}
\caption{The graph constructed in the example of Appendix~\ref{app:long}.}

\label{fig:long_chains_bad_example}
\end{figure}
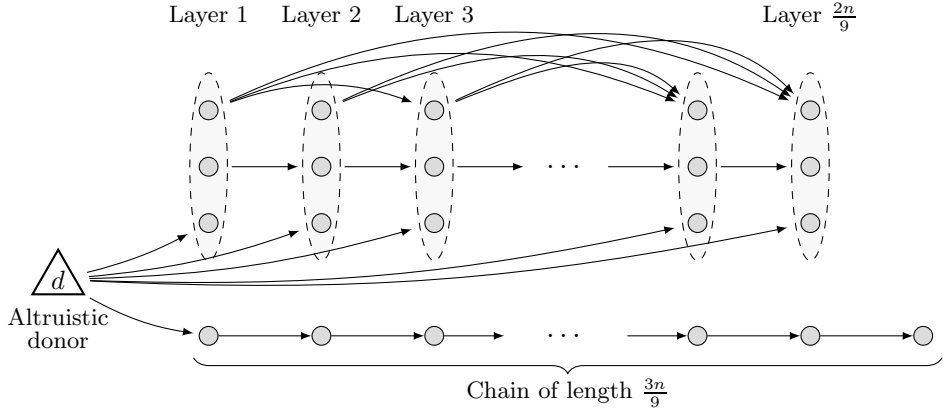

Consider the graph $G$ in Figure~\ref{fig:long_chains_bad_example}. The altruistic donor $d$ is shown as a triangle (we do not count him as one of the $n$ vertices). The vertices consist of (i) a chain of $\frac{3n}{9}$ vertices, and (ii) a network of $\frac{2n}{9}$ layers --- shown as dashed ellipses --- with each layer consisting of three vertices. All vertices in layer $i$ have edges to all vertices in layers $(i+1),\dots,\frac{2n}{9}$, and there are edges from $d$ to all vertices in each layer. Observe that since there are no cycles in this graph, all matches must happen only via a chain that originates in $d$. The longest chain consists of $\frac{3n}{9}$ vertices. Thus, there is a unique optimal matching in $G$, and its size is $|\opt(G)| = \frac{3n}{9}$. 

Let us now assume the there are two players with probability $p=\frac 12$ each. We first observe that the expected share of a subgraph $H \sim_p G$ in the optimal matching is $\E_{H \sim_p G}[ | \opt(G) \restriction H| ] = \frac{3n}{18}$. 

Next we examine $\opt(H)$. We can assume that $d\in H$, as this is always true for one of the two players (so we focus on that player without loss of generality). For any given assignment of the other vertices to the players, we say that a layer is \emph{good} if at least one of the vertices in that layer is in $H$. It is easy to see that --- under the assumption of $d\in H$ --- $\opt(H)$ is at least the number of good layers (via a chain that starts at $d$ and visits each good layer in order). Notice that each layer is good with probability $1 - (1/2)^3 = 7/8$. Therefore, the expected number of layers that are good is $\frac{7}{8} \cdot \frac{2n}{9}$. It follows that $\E_{H\sim p}[|\opt(H)|]\geq  \frac{14}{72}n$. 

Since both $|\opt(H)|$ and $|\opt(G) \restriction H|$ are almost always within $O(\sqrt{n})$ of their expected values, we have that with high probability, $|\opt(H)| - | \opt(G) \restriction H|  \geq \frac{2}{72}n -O(\sqrt{n})$. That is, $\opt(G)$ is not individually rational, and an individually rational matching would have to use $d$ to initiate a chain into the layered network. But such a chain can have length at most $\frac{2n}{9}$, whereas $\opt(G)=\frac{3n}{9}$ --- the difference is $\Omega(n)$, as desired.

\end{document}